\documentclass[12pt]{article}

\usepackage{amsthm}
\usepackage{algpseudocode}
\usepackage{algorithm}
\newtheorem{theorem}{Theorem}

\newtheorem{corollary}{Corollary}
\usepackage{graphicx} 
\usepackage{hyperref}
\usepackage{amsmath}
\newtheorem{definition}{Definition}

\usepackage{amssymb}

\usepackage{graphicx}

\usepackage[letterpaper,margin=1in]{geometry}

\renewenvironment{abstract}
	{\quotation}
	{\endquotation}

\date{}

\makeatletter
\renewcommand{\fnum@figure}{\textbf{Figure \thefigure}}
\renewcommand{\fnum@table}{\textbf{Table \thetable}}
\makeatother

\usepackage{scicite}

\usepackage{url}

\def\scititle{
Unveiling Scaling Laws of Parameter Identifiability and Uncertainty Quantification in Data-Driven Biological Modeling}
\title{\bfseries \boldmath \scititle}

\author{
	Shun~Wang$^{1}$,
	Wenrui~Hao$^{1\ast}$ \and
	\small$^{1}$Department of Mathematics, Penn State University, University Park, Pennsylvania, USA.
}

\begin{document} 

\maketitle

\begin{abstract} 
\bfseries \noindent 
Integrating high-dimensional biological data into data-driven mechanistic modeling requires rigorous practical identifiability to ensure interpretability and generalizability. However, coordinate identifiability analysis often suffers from numerical instabilities near singular local minimizers.  
We present a computational framework that uncovers fundamental scaling laws governing practical identifiability through asymptotic analysis. By synthesizing Fisher information with perturbed Hessian matrices, we establish a hierarchical approach to quantify coordinate identifiability and inform uncertainty quantification within non-identifiable subspaces across different orders.
Supported by rigorous mathematical analysis and validated on synthetic and real-world data, our framework was applied to HIV-host dynamics and spatiotemporal amyloid-$\beta$ propagation. These applications demonstrate the framework’s efficiency in elucidating critical mechanisms underlying HIV diagnostics and Alzheimer’s disease progression. In the era of large-scale mechanistic digital twins, our framework provides the scaling laws for data-driven modeling in terms of both parameter identifiability and uncertainty, ensuring that data-driven inferences are grounded in verifiable biological reality.
\end{abstract}

\textbf{Teaser:}  Leveraging Parameter Identifiability Scaling Laws to Build Robust Data-Driven Biological Models.

\textbf{Short title:} Scaling Law of Parameter Identifiability.

\textbf{Keywords:} Parameter Identifiability, Uncertainty Quantification, Data-driven Modeling, Asymptotic Analysis.

\newpage

\section*{Introduction}
\noindent
Incorporating high-dimensional biological data into data-driven mechanistic modeling has become a robust framework for elucidating complex biological processes and forecasting disease trajectories across multiple scales. At the subcellular and cellular levels, integrating single-cell transcriptomics with kinetic rate equations—particularly through the lens of RNA velocity \cite{la2018rna}—enables reconstruction of latent developmental trajectories by modeling transitions from unspliced to spliced mRNA \cite{qiu2022mapping,bocci2022splicejac,su2024hodge,weiler2024cellrank,yan2026decoding}. At the mesoscopic tissue scale, spatial proteomics and fluorescent labeling of the tumor microenvironment \cite{anderson2024global,gatenbee2022immunosuppressive,li2019infiltration,anderson2024global} support the development of spatiotemporal interaction models that capture the biophysical constraints governing immune cell infiltration and tumor-immune dynamics \cite{liu2025multiscale,liao2022mathematical}. At the macroscopic clinical level, longitudinal biomarker datasets such as those from the Alzheimer’s Disease Neuroimaging Initiative (ADNI) facilitate cascade modeling \cite{petrella2024personalized,hao2022optimal}, mapping the stereotypical deposition patterns of Amyloid-$\beta$ and Tau proteins across the temporal progression of neurodegeneration \cite{rabiei2025data,zheng2022data}.

Despite these advancements, translating phenomenological observations into predictive clinical modeling remains hindered by challenges in parameter identifiability \cite{raue2009structural} and model uncertainty \cite{mitra2019parameter}. A fundamental prerequisite for the reliability of data-driven frameworks is determining which model parameters can be uniquely inferred from the limited, noisy, and sparse biological data \cite{miao2011identifiability}. Moreover, rigorously quantifying uncertainty distributions is essential for assessing the confidence and reliability of model predictions \cite{portela2025conformal}. Addressing these modeling bottlenecks is therefore critical for the robust translation of data-driven models into actionable precision medicine tools and clinical decision-support systems \cite{rockne2026future}.

First, parameter (practical) identifiability determines whether model parameters can be uniquely resolved from specific experimental observations. The profile likelihood method has long served as the gold standard for this assessment \cite{raue2009structural,liu2024parameter,bergman2025efficient,gallo2022lack,ciocanel2024parameter,eisenberg2017confidence,simpson2026parameter}. Nevertheless, applying traditional approaches—particularly the profile likelihood method—to high-dimensional biological systems is often infeasible due to prohibitive computational costs. Recently, a systematic computational framework has been proposed that formalizes practical identifiability on a rigorous mathematical foundation, introduces a high-efficiency metric to alleviate computational burdens, and bridges the gap between the theoretical definition of identifiability and its practical assessment \cite{wang2025systematic}.
However, the current aproaches face significant challenges when resolving parameters characterized by relatively flat likelihood profiles. While the conventional profile likelihood approach can still resolve such parameters using a calibrated statistical threshold, developing quantitative indices that rigorously characterize these low-curvature cases remains an open challenge. This “flatness” is not merely a numerical artifact but a recurring feature across multiple biological scales—including morphogen gradients \cite{simpson2026parameter,rodriguez2022concentration}, virus-host dynamics \cite{nguyen2015ebola,sharomi2010mathematical}, and taxol-treated cancer models \cite{borisov2020confidence,mirzaei2023investigating}—where the relative insensitivity of the objective function to parameter perturbations necessitates a more nuanced approach to threshold selection to ensure predictive reliability \cite{kreutz2018easy}. These observations strongly suggest the existence of underlying scaling laws of parameter identifiability that can systematically characterize such behaviors.

Second, model uncertainty in complex biological systems is fundamentally linked to practical non-identifiability, where insufficient information to constrain parameter values propagates as ambiguity in model outputs \cite{wieland2021structural,monsalve2022analysis}. While existing uncertainty quantification (UQ) methods assess how non-identifiable parameters influence predictions \cite{wang2025systematic}, they currently lack the capability to quantify how uncertainty and confidence intervals scale with the intrinsic identifiability of each parameter. Establishing a rigorous connection between practical identifiability and uncertainty propagation is therefore essential for reliable predictive modeling.

In this paper, building upon the recent developed parameter identifiablity framework \cite{wang2025systematic}, we present a rigorous computational framework that reveals the scaling law of practical identifiability. This approach significantly enhances the detection of higher-order identifiability, effectively characterizing parameter identifiability where traditional profile likelihood methods typically fail due to numerical insensitivity on 'flat' likelihood manifolds. Furthermore, we establish a robust UQ protocol that explicitly accounts for higher-order practically non-identifiable parameters. Supported by formal mathematical proofs and validated against the simulation data, our framework was applied to HIV host-virus dynamics and the spatiotemporal distribution of amyloid-$\beta$ (A$\beta$) in the brain. These applications demonstrate the framework’s capacity to elucidate critical biological mechanisms underlying HIV therapeutic responses and the progression of Alzheimer’s disease.

\section*{Results}
\subsection*{Overview of the Computational Approach to Parameter Identifiability Scaling Laws} 

We consider a general model $\boldsymbol{\varphi(t, \theta)}$, where $\boldsymbol{t}$ denotes the independent input variable (e.g., time) and $\boldsymbol{\theta} \in \mathbb{R}^k$ is the parameter vector. The model architecture is versatile, encompassing explicit functional forms (e.g., neural networks) or solutions to complex differential equations, with observable variables $\boldsymbol{h(\varphi(t, \theta))}$ mapped to experimental measurements $\{\boldsymbol{t}_i, \hat{\boldsymbol{h}}_i\}_{i=1}^N$ collected at discrete time points.

Using the least-squares objective
$
l(\boldsymbol{h}(t, \boldsymbol{\theta}), \hat{\boldsymbol{h}}) = \sum_{i=1}^N \| \boldsymbol{h}(t_i, \boldsymbol{\theta}) - \hat{\boldsymbol{h}}_i \|_2^2,
$
we obtain the optimal parameter set $\boldsymbol{\theta^*}$ and compute the Fisher Information Matrix (FIM, $F$) and the perturbed Hessian ($H$) \cite{wang2025systematic}. Our framework integrates eigenvalue decomposition (EVD) with the Schur complement (SC) to classify parameter identifiability across hierarchical scales. The algorithm is illustrated in {\bf Algorithm~\ref{am:1}}, and a detailed mathematical formulation is provided in {\bf Materials and Methods}.

We introduce a novel metric, $\mathcal{K}_i$, to quantify the $i$-th order of parameter identifiability. The zero-order component, $\mathcal{K}_0$, recovers traditional parameter identifiability,
$
\mathcal{K}_0 = \|(I - A A^{\dagger}) s_i \|_2^2,
$
\cite{wang2025systematic}, while the first-order metric, $\mathcal{K}_1$, captures the emergence of flat likelihood profiles when $\mathcal{K}_0 = 0$. A detailed analytical formulation of $\mathcal{K}_i$ is provided in {\bf Algorithm~\ref{am:1}}.

Furthermore, our higher-order uncertainty quantification (UQ) framework evaluates predictive uncertainty arising from non-identifiable subspaces. Specifically, it isolates contributions from zero-order non-identifiable parameters ($\boldsymbol{U_{k-r_0}^\top \theta}$, red region) and first-order non-identifiable parameters ($\boldsymbol{U_{k-r_0-r_1}^\top \theta}$, blue region) in Fig. \ref{fig:1}, thereby also characterizing the uncertainty order of the loss function.


\begin{figure}[ht]
\centering
\includegraphics[width=0.9\linewidth]{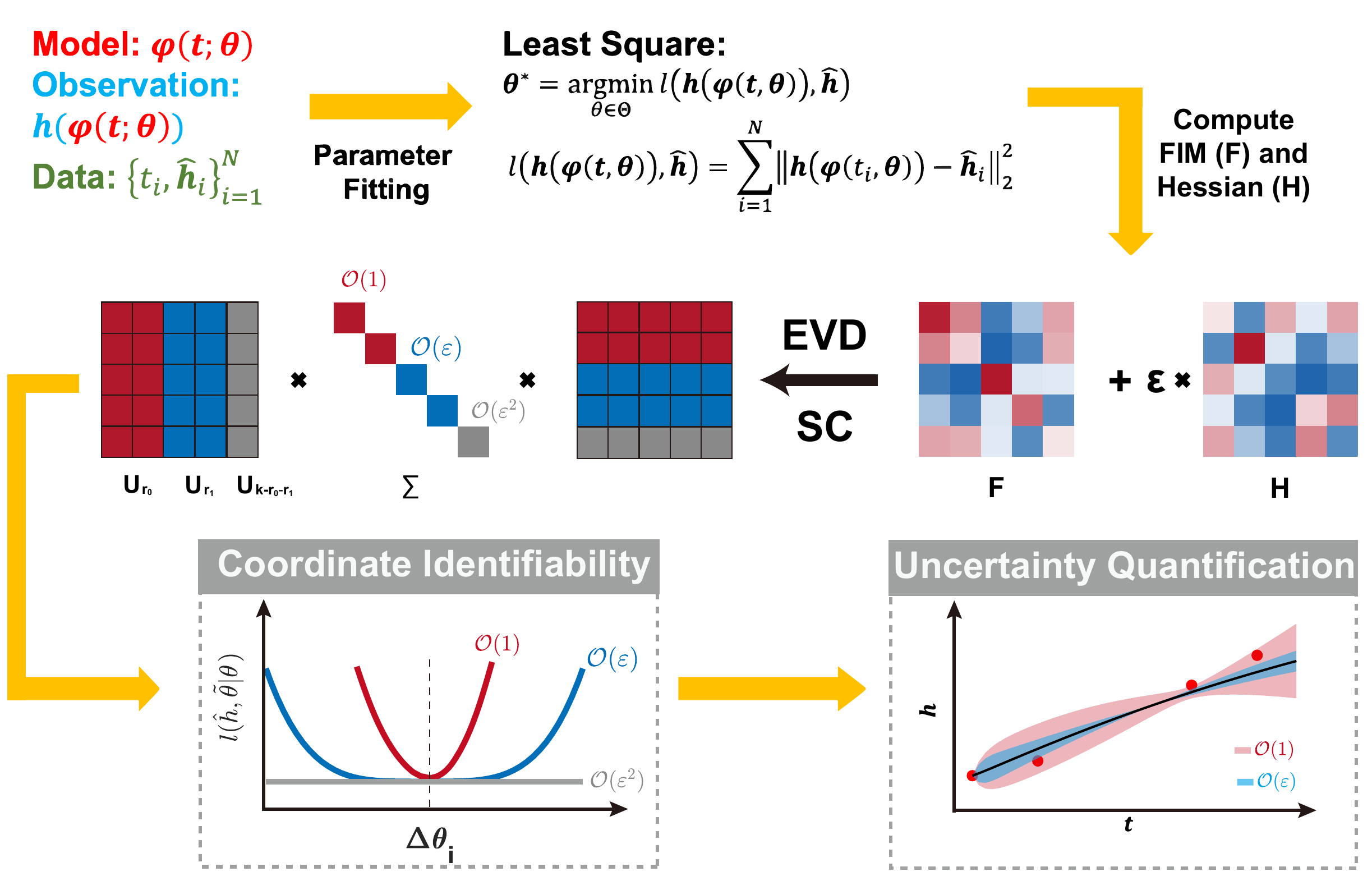}
\caption{\scriptsize\textbf{Illustration of the scaling law of parameter identifiablity.} Schematic framework of higher-order parameter identifiability analysis integrates eigenvalue decomposition (EVD) and the Schur complement (SC) to categorize parameter identifiability across hierarchical scales. The eigenvalue matrix $\Sigma$ is color-coded to denote the order of identifiability: red indicates zero-order identifiability [$\mathcal{O}(1)$], blue represents first-order identifiability [$\mathcal{O}(\varepsilon)$], and gray corresponds to second-order identifiability [$\mathcal{O}(\varepsilon^2)$]. Within the eigenvector matrix $U$, these spectral regimes define specific parameter combinations: zero-order identifiable coordinates ($\boldsymbol{U_{r_0}^\top \theta}$, red), first-order identifiable coordinates ($\boldsymbol{U_{r_1}^\top \theta}$, blue), and second-order identifiable coordinates ($\boldsymbol{U_{k-r_0-r_1}^\top \theta}$, gray). The metric $\mathcal{K}_i$ is employed to determine the higher-order coordinate practical identifiability. Furthermore, the higher-order uncertainty quantification (UQ) framework evaluates the predictive uncertainty originating from non-identifiable subspaces, specifically isolating contributions from zero-order non-identifiable parameters ($\boldsymbol{U_{k-r_0}^\top \theta}$, red region) and first-order non-identifiable parameters ($\boldsymbol{U_{k-r_0-r_1}^\top \theta}$, blue region).}
\label{fig:1}
\end{figure}

\subsection*{Polynomial Fitting Benchmark Example}

To evaluate the accuracy of our framework, we first apply it to a polynomial model defined as
\begin{align}
\label{eq:0}
h(t;\boldsymbol{\theta}) &= \theta_1(|t-2.5|-3) 
+ \theta_2 \big[(t-1)(t-2)(t-3)(t-4) + 1.5\big] 
+ \frac{1}{2}\theta_2^2 \frac{0.5(t-2)(t-3)(t-4)}{(1-2)(1-3)(1-4)} \nonumber \\
&\quad + \theta_3 \big[(t-1)(t-2)(t-3)(t-4) + 1.0\big] 
+ \theta_4 \big[(t-1)(t-2)(t-3)(t-4) + 0.5\big].
\end{align}

We then compute the parameter identifiability using our framework and compare the results with the profile likelihood method \cite{raue2009structural,raue2013joining}, which serves as a benchmark in asymptotic analysis of parameter identifiability.
\begin{figure}[ht]
\centering
\includegraphics[width=0.9\linewidth]{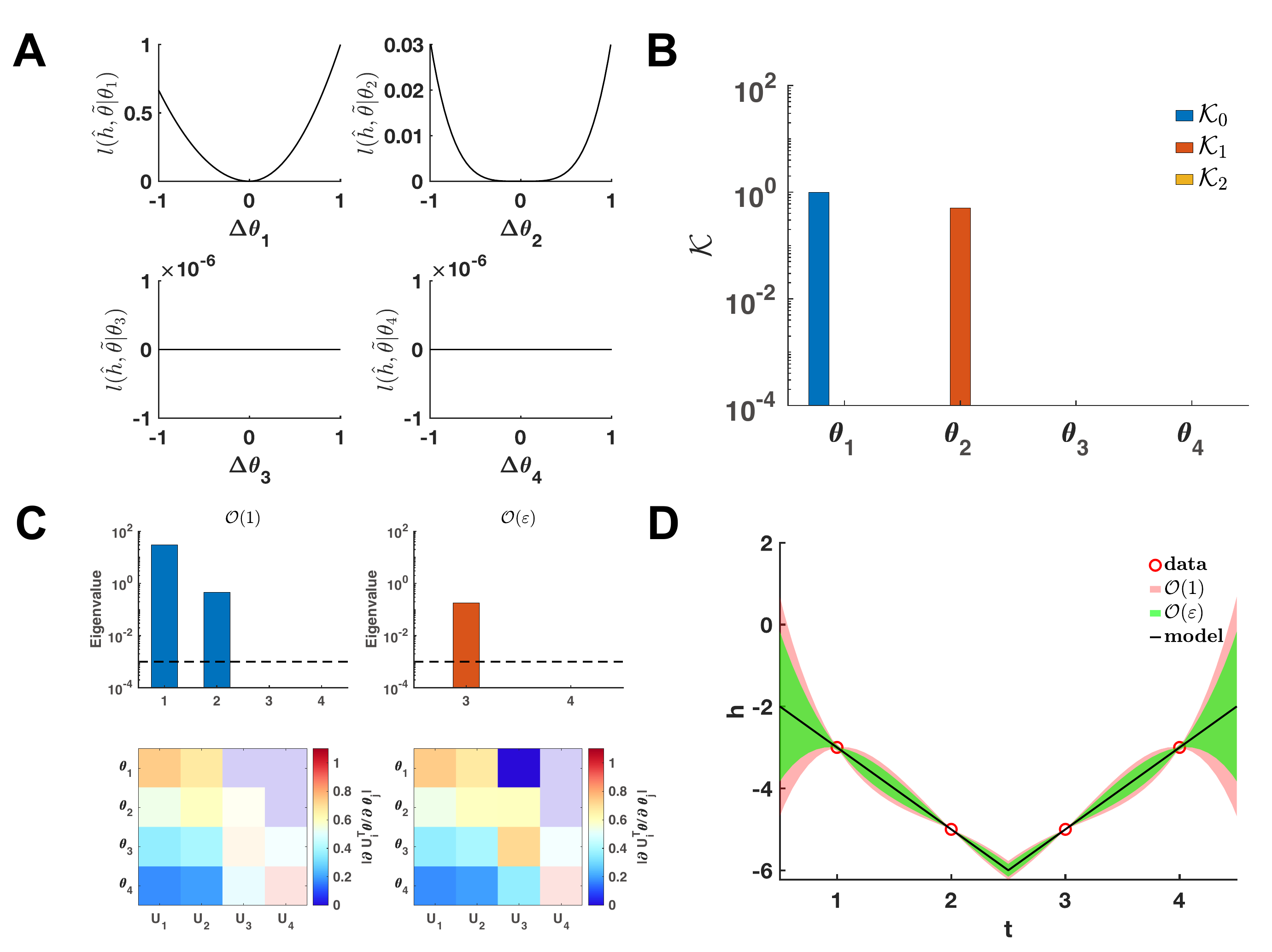}
\caption{\scriptsize\textbf{Validation method accuracy in polynomial fitting.} \textbf{(A)} Coordinate identifiability analysis at $\boldsymbol{\theta^*}=[2,0,0,0]^T$ using the profile likelihood. \textbf{(B)} The metrics $\mathcal{K}_i$ for conducting practical identifiability analysis. \textbf{(C)} Eigenvalue of $\varepsilon$-order practical identifiability analysis and heatmap of the eigenvector matrix. The dashed line is the threshold $\epsilon=10^{-3}$.  The color bar represents the values of each eigenvector element. The shaded area indicates the eigenvectors corresponding to $\varepsilon$-order non-identifiable parameters. \textbf{(D)} UQ from the perturbation to $\varepsilon$-order non-identifiable parameters. Circles represent the synthetic data generated from the polynomial function. The solid line represents the polynomial function with the given parameter values $\boldsymbol{\theta^*}$. The red and green shaded regions represent the 95\% confidence intervals under zero-order and first-order $\varepsilon$ perturbations of the non-identifiable parameters, respectively.}
\label{fig:2}
\end{figure}

Given the parameter set $\boldsymbol{\theta^*}$, we employ the profile likelihood method \cite{raue2009structural,raue2013joining} to evaluate the practical identifiability of each parameter in Eq.~\ref{eq:0}, establishing a benchmark for comparison (Fig.~\ref{fig:2}a). The conditional loss function $l(\hat{h},\tilde{\boldsymbol{\theta}};\theta_2)$ for parameter $\theta_2$ exhibits a degenerate convex profile, characterized by regions of low curvature (flatness), whereas the conditional loss function $l(\hat{h},\tilde{\boldsymbol{\theta}};\theta_1)$ displays sharply peaked convexity (Fig.~\ref{fig:2}A).  

Our proposed method computes the metrics $\mathcal{K}_i$ to assess the coordinate-wise identifiability of all parameters in Eq.~\ref{eq:0}. The results show that parameter $\theta_1$ possesses zero-order practical identifiability $(\mathcal{K}_0>0)$, while parameter $\theta_2$ exhibits first-order practical identifiability $(\mathcal{K}_0=0, \mathcal{K}_1>0)$. In contrast, the remaining parameters are non-identifiable $(\mathcal{K}_i)$ (Fig.~\ref{fig:2}B). These classifications are in close agreement with the benchmark results (Fig.~\ref{fig:2}A).  

Notably, if only the zero-order metric ($\mathcal{K}_0$) is considered, parameter $\theta_2$ is incorrectly classified as non-identifiable (Fig.~\ref{fig:2}B); however, the benchmark analysis confirms its practical identifiability (Fig.~\ref{fig:2}A). This discrepancy highlights the critical necessity of a higher-order practical identifiability framework to resolve parameters that reside within the flat regions of the likelihood surface.

According to the asymptotic analysis of parameter identifiability, the eigenvalues further reveal the practical identifiability of the parameters (Fig.~\ref{fig:2}C). Specifically, $\boldsymbol{U_1^\top \theta}$ and $\boldsymbol{U_2^\top \theta}$ exhibit zero-order practical identifiability, whereas $\boldsymbol{U_3^\top \theta}$ displays first-order parameter identifiability. In contrast, $\boldsymbol{U_4^\top \theta}$ is non-identifiable (Fig.~\ref{fig:2}C).  
Using our UQ computational framework, we introduce parameter perturbations and calculate the 95\% confidence interval for variations in the dependent variable. As illustrated in Fig.~\ref{fig:2}D, whether considering the zero-order or first-order non-identifiable subspaces, the resulting confidence intervals vanish at the discrete data points. This convergence indicates that the objective function is locally invariant to perturbations restricted to the non-identifiable parameter manifold, confirming that such variations do not compromise the model’s agreement with empirical observations.  
Moreover, the uncertainty region derived from the first-order non-identifiable subspace is contained within that generated by the zero-order non-identifiable subspace. This hierarchical nesting of uncertainty bounds provides empirical validation for the theoretical framework established in Theorem~\ref{thm:3}.

\subsection*{Higher-Order Parameter Identifiability in the HIV Kinetic Model}

To investigate the scaling laws of parameter identifiability in data-driven biological modeling, we apply our framework to a canonical HIV infection model \cite{perelson2002modelling,stafford2000modeling}. This system, governed by a set of ordinary differential equations (ODEs) (see section ``HIV host-virus dynamics'' in the Supplementary Materials), incorporates six fundamental parameters that characterize key viral kinetic processes (Fig.~\ref{fig:3}A; see Table~\ref{tab:S1} for detailed parameter descriptions). The observable variable is defined as the log-transformed plasma HIV concentration, 
\[
h(t;\boldsymbol{\theta}) = \log_{10} V(t;\boldsymbol{\theta}),
\] 
consistent with clinical longitudinal measurements from HIV-infected patients \cite{stafford2000modeling}.

\begin{figure}[ht]
\centering
\includegraphics[width=1.0\linewidth]{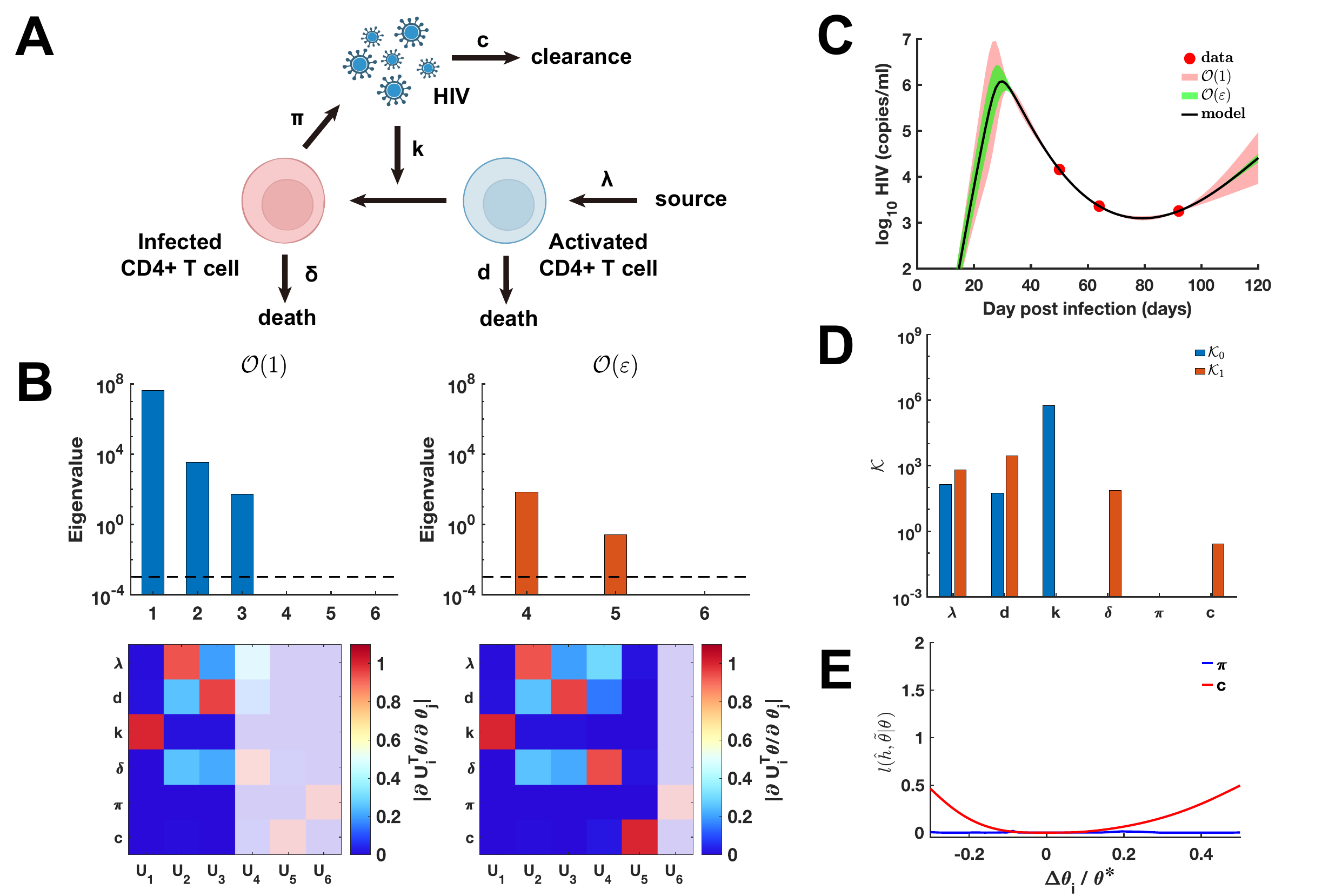}
\caption{\scriptsize\textbf{Higher-order parameter identifiability analysis of HIV virus-host dynamics.} 
\textbf{(A)} Schematic of the HIV infection model. 
\textbf{(B)} Eigenvalues from the $\varepsilon$-order practical identifiability analysis and heatmap of the corresponding eigenvector matrix. The dashed line indicates the threshold $\epsilon = 10^{-3}$. The color bar represents the magnitude of each eigenvector element, and the shaded area highlights eigenvectors corresponding to $\varepsilon$-order non-identifiable parameters. 
\textbf{(C)} Uncertainty quantification (UQ) from perturbations of $\varepsilon$-order non-identifiable parameters. Circles represent clinical log-transformed plasma HIV concentrations \cite{stafford2000modeling}, and the solid line shows reconstructed dynamics using $\boldsymbol{\theta^*}$ from Table~\ref{tab:S1}. The red and green shaded regions denote 95\% confidence intervals for zero-order and first-order $\varepsilon$ perturbations, respectively. 
\textbf{(D)} Metrics $\mathcal{K}_i$ used for higher-order practical identifiability analysis. 
\textbf{(E)} Coordinate-wise identifiability analysis for parameters $\pi$ and $c$ using the profile likelihood method.}\label{fig:3}
\end{figure}

Following the determination of the optimal parameter set $\boldsymbol{\theta^*}$ via the least-squares objective defined in Eqs.~\ref{eq:1}--\ref{eq:2}, we conducted an asymptotic analysis of parameter identifiability. Our results reveal a hierarchical structure in the parameter space: the eigen-coordinates $\boldsymbol{U_1^\top \theta}$, $\boldsymbol{U_2^\top \theta}$, and $\boldsymbol{U_3^\top \theta}$ exhibit zero-order practical identifiability, whereas $\boldsymbol{U_4^\top \theta}$ and $\boldsymbol{U_5^\top \theta}$ display first-order practical identifiability (Fig.~\ref{fig:3}B). In contrast, $\boldsymbol{U_6^\top \theta}$ is non-identifiable.  
Using the $\mathcal{K}_i$ metric, we find that parameters $\lambda$, $d$, and $k$ exhibit zero-order practical identifiability, while $\delta$ and $c$ show first-order practical identifiability (Fig.~\ref{fig:3}D). Parameter $\pi$ is classified as non-identifiable.  
To validate these findings, we performed a profile likelihood analysis for $c$ and $\pi$; the resulting profiles are in close agreement with the $\mathcal{K}_i$-based classification, confirming the sensitivity of the metric to varying degrees of parameter resolution (Fig.~\ref{fig:3}E).

The higher-order UQ analysis reveals that parameters exhibiting zero-order non-identifiability exert a broad influence over both the initial HIV outbreak and the subsequent phases of viral relapse. Conversely, parameters characterized by first-order non-identifiability primarily modulate the early stages of infection, with their impact diminishing as the system progresses. These results demonstrate that the hierarchical order of practical identifiability effectively maps the architecture of predictive uncertainty onto the temporal evolution of HIV virus-host dynamics, suggesting that higher-order practical identifiability is essential for delineating which phases of the infection—from initial outbreak to long-term relapse—are most susceptible to parameter-induced ambiguity.

\subsection*{Hierarchical Parameter Identifiability in the Amyloid-$\beta$ Aggregation Spatiotemporal Model}
We applied our higher-order practical identifiability framework to investigate the spatiotemporal dynamics of Amyloid-$\beta$ ($A\beta$). Using neuroimaging data from the ADNI cohort, we partitioned the brain into 68 cortical regions and constructed a graph Laplacian matrix to represent the structural connectome (Fig.~\ref{fig:4}A; see section ``Spatiotemporal dynamics of $A\beta$ on the brain'' in the Supplementary Materials for details). These connectivity patterns were incorporated into a networked PDE model (Eq.~\ref{eq:S76}).  
In contrast to the low-dimensional HIV models discussed previously, the parameter dimensionality of this spatiotemporal system exceeds the computational capacity of traditional profile likelihood methods, highlighting the need for a more scalable analytical approach.\begin{figure}[ht]
\centering
\includegraphics[width=1.0\linewidth]{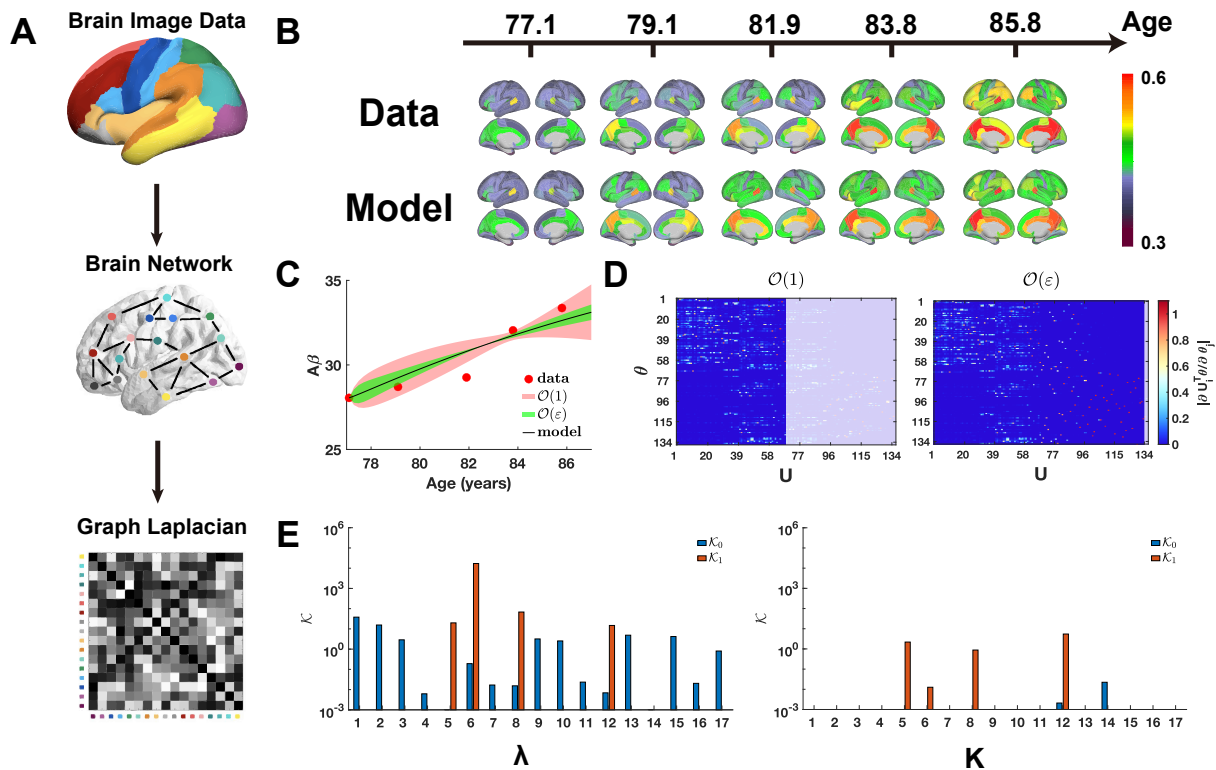}
\caption{\scriptsize\textbf{Hierarchical parameter identifiability of Amyloid-$\beta$ ($A\beta$) spatiotemporal dynamics.} 
\textbf{(A)} Image preprocessing pipeline for constructing the graph Laplacian matrix. 
\textbf{(B)} Reconstruction of $A\beta$ spatiotemporal dynamics using the network-based PDE model. The colorbar represents the standardized uptake value ratios (SUVRs) of $A\beta$.  
\textbf{(C)} Uncertainty quantification (UQ) from perturbations of $\varepsilon$-order non-identifiable parameters. Circles denote observed $A\beta$ SUVRs, and the solid line shows reconstructed dynamics using $\boldsymbol{\theta^*}$ from Tables~\ref{tab:S3}--\ref{tab:S9}. The red and green shaded regions represent 95\% confidence intervals for zero-order and first-order $\varepsilon$ perturbations, respectively.  
\textbf{(D)} Heatmap of the eigenvector matrix. The dashed line indicates the threshold $\epsilon = 2\times10^{-3}$. The color bar represents the magnitude of each eigenvector element, and the shaded area highlights eigenvectors corresponding to $\varepsilon$-order non-identifiable parameters.  
\textbf{(E)} Metrics $\mathcal{K}_i$ used for higher-order practical identifiability analysis of the first 17 brain regions.}\label{fig:4}
\end{figure}

Following the determination of the optimal parameter set $\boldsymbol{\theta^*}$ via the least-squares objective defined in Eqs.~\ref{eq:1}--\ref{eq:2}, the spatiotemporal $A\beta$ dynamics are accurately reconstructed by the networked PDE model (Fig.~\ref{fig:4}B). Subsequent asymptotic analysis of practical identifiability reveals a complex spectral hierarchy: roughly half of the eigen-coordinates exhibit zero-order identifiability, while three specific coordinates display first-order identifiability (Figs.~\ref{fig:4}D and \ref{fig:S2}A). Notably, such high-dimensional resolution is numerically inaccessible via traditional profile likelihood methods, which fail to resolve fine-grained parameter identifiability in large-scale networked systems.  
Using the $\mathcal{K}_i$ metric, we find that the regional production rates, $\boldsymbol{\lambda}$, exhibit either zero-order or first-order practical identifiability across cortical regions. In contrast, the parameters governing the local $A\beta$ carrying capacity, $\boldsymbol{K}$, manifest only first-order practical identifiability or remain non-identifiable (Figs.~\ref{fig:4}E and \ref{fig:S2}B). The observed monotonic increase in $A\beta$ accumulation over time (Fig.~\ref{fig:4}B) is fully consistent with our asymptotic analysis of coordinate-wise practical identifiability.

Finally, we performed a higher-order uncertainty quantification (UQ) analysis, which revealed distinct sensitivity patterns across different orders of parameter identifiability. Eigen-coordinates exhibiting zero-order non-identifiability show vanishing uncertainty only at age 84. In contrast, those characterized by first-order non-identifiability display a broader window of zero uncertainty spanning ages 82 to 84 (Fig.~\ref{fig:4}C). Notably, this interval aligns with the phase of rapid $A\beta$ accumulation: whereas zero-order UQ identifies only the terminal state at age 84, the first-order UQ analysis captures the entire temporal window of the $A\beta$ outbreak.

\section*{Discussions}
Parameter identifiability is a cornerstone of data-driven modeling in biological systems, directly determining the reliability, robustness, and predictive power of model-based inferences. As data-driven approaches gain prominence, rigorous assessment of parameter identifiability has become essential. Empirical datasets are often sparse and noisy, obscuring the underlying parameter landscape and leaving models vulnerable to overfitting stochastic fluctuations rather than capturing the true governing dynamics. While this challenge has received substantial attention, existing methods frequently rely on heuristic criteria or numerical approximations \cite{raue2009structural,raue2013joining,wieland2021structural}, which are computationally intensive and prone to local instabilities. Such limitations can yield misleading conclusions about parameter resolvability, undermining confidence in model predictions. Consequently, establishing a rigorous framework for parameter identifiability has shifted from a peripheral concern to a fundamental mathematical prerequisite for translating data-driven insights into actionable biological knowledge.

In this work, we uncover the fundamental scaling laws of parameter identifiability, as formalized in the definition of practical identifiability \cite{wang2025systematic}, by introducing the perturbated Hessian matrix ($H$). Within this framework, the infinitesimal parameter $\varepsilon$ arises from an experimental perturbation $\boldsymbol{\delta}$, providing a rigorous mathematical basis to quantify the sensitivity mapping between the parameter manifold and empirical observations. We formalize higher-order parameter identifiability in Theorem~\ref{thm:1} and introduce a hierarchical series of scales, $\mathcal{K}_i$, which substantially reduce the computational burden associated with traditional profile likelihood methods. Analytical corollaries (Corollaries~\ref{cy:2} and \ref{cy:3}) reveal that the emergence of flat likelihood profiles is primarily governed by the first-order metric $\mathcal{K}_1$, enabling prediction of complex likelihood surface topologies directly from $\mathcal{K}_i$ without exhaustive numerical optimization.  

The higher-order UQ analysis projects predictive uncertainty onto non-identifiable subspaces, establishing a hierarchical containment principle in which higher-order uncertainty manifolds are nested within their zero-order counterparts (Theorem~\ref{thm:3}). Complementing this geometric insight, the derived uncertainty orders for the loss function (Eqs.~\ref{eq:41}--\ref{eq:42}) rigorously quantify the sensitivity of the objective landscape to infinitesimal perturbations in empirical data. Using a benchmark system, we validated these theorems and metrics, showing complete agreement between analytical predictions and numerical landscapes generated via the profile likelihood method (Fig.~\ref{fig:2}). These results confirm that our asymptotic framework accurately captures higher-order practical identifiability while providing a direct, computationally efficient alternative to exhaustive likelihood scanning.

We validated the hierarchical scaling laws of parameter identifiability using two distinct high-dimensional, data-driven paradigms: nonlinear HIV-host viral dynamics and the spatiotemporal propagation of $A\beta$ in the human brain. In the canonical HIV infection model, higher-order parameter identifiability is prominently observed in the virion production rate ($\pi$) of infected CD4+ T cells and the viral clearance rate ($c$) (Fig.~\ref{fig:3}D). This higher-order resolution is essential for delineating specific pathophysiological phases of HIV infection governed by distinct parameter subsets (Fig.~\ref{fig:3}C). Within our networked-PDE framework for $A\beta$ dynamics—a hallmark biomarker of Alzheimer’s disease—higher-order parameter identifiability is evident in the carrying capacity parameters ($\boldsymbol{K}$) across individual cortical regions (Fig.~\ref{fig:4}E). Notably, while zero-order UQ captures only the terminal saturation state, higher-order identifiability successfully resolves the transient phase of rapid $A\beta$ accumulation (Fig.~\ref{fig:4}C), highlighting its capacity to detect critical transition windows in disease progression.

Although this study focuses on the least-squares objective under the assumption of small residuals, within which the FIM and perturbated Hessian accurately approximate the loss Hessian (Eq.~\ref{eq:10}), the framework extends naturally to more general objective functions, including cross-entropy or scenarios with large residuals. In these cases, parameter identifiability analysis remains mathematically consistent by substituting the FIM with the Hessian of the generalized loss function and reevaluating the perturbation operator $M_1[\boldsymbol{\delta}]$ accordingly.

In summary, we present a novel framework for higher-order parameter identifiability analysis, grounded in the scaling laws that govern parameter resolvability. By systematically integrating fundamental properties of identifiability with hierarchical analytical metrics and higher-order UQ protocols, our approach establishes a rigorous mathematical bridge between high-dimensional biological models and noisy empirical data. Supported by comprehensive theoretical proofs, the framework delineates the hierarchical structure of practical identifiability across multiple orders of magnitude. As biological modeling increasingly shifts toward large-scale mechanistic digital twins, this higher-order framework provides a critical diagnostic tool to ensure that data-driven insights are translated into robust, actionable, and biologically meaningful knowledge.

\section*{Materials and Methods}
First, we introduce the Fisher Information Matrix (FIM) and the perturbed Hessian matrix to define parameter identifiability. Next, we develop a series of asymptotic analysis theorems to rigorously quantify higher-order parameter identifiability and establish the corresponding scaling laws. Finally, we present an uncertainty quantification (UQ) approach to capture how uncertainty arising from higher-order practically non-identifiable parameters propagates into model predictions.
\subsection*{Practical Identifiability Definition}
For the time-series data-driven modeling approach, the loss function $l(\boldsymbol{h}\boldsymbol{(t, \theta)}, \hat{\boldsymbol{h}})$ is defined using the least squares method as follows:
\begin{equation}
\label{eq:1}
l(\boldsymbol{h}\boldsymbol{(t, \theta)}, \hat{\boldsymbol{h}}) = \sum_{i=1}^N \| \boldsymbol{h}(t_i, \boldsymbol{\theta}) - \hat{\boldsymbol{h}}_i \|_2^2,~\boldsymbol{h}(t_i, \boldsymbol{\theta}) = \boldsymbol{h}(\boldsymbol{\varphi}(t_i, \boldsymbol{\theta}))
\end{equation}
where $N$ is the number of experimental data, $\boldsymbol{\varphi}(\boldsymbol{t}, \boldsymbol{\theta})\in \mathbb{R} ^M$ denotes the system output with parameter $\boldsymbol{\theta}\in\mathbb{R}^K$ at the time $\boldsymbol{t}=[t_1,t_2,…,t_N ]^\top$ ( $\boldsymbol{\varphi(t, \theta)} = 
\begin{bmatrix}
\varphi_1\boldsymbol{(t, \theta)}, \varphi_2\boldsymbol{(t, \theta)}, \cdots, \varphi_M\boldsymbol{(t, \theta)}
\end{bmatrix}^\top $, and $\varphi_i\boldsymbol{(t, \theta)} = 
\begin{bmatrix}
\varphi_i(t_1, \boldsymbol{\theta}), \varphi_i(t_2, \boldsymbol{\theta}), \cdots, \varphi_i(t_N, \boldsymbol{\theta})
\end{bmatrix}^\top $). The  experimental observation is denoted as $\{(t_i, \hat{\boldsymbol{h}}_i)\}_{i=1}^N, ~ \big(\hat{\boldsymbol{h}} = 
\begin{bmatrix}
\hat{\boldsymbol{h}}_1, \hat{\boldsymbol{h}}_2, \dots, \hat{\boldsymbol{h}}_N
\end{bmatrix}^\top \big)$, and the continuous differentiable function $\boldsymbol{h}(\cdot)$ represents measurable quantities ($\boldsymbol{h}(\cdot)\in\mathbb{R}^L$). The parameters of this system $\boldsymbol{\theta^*}$ are given as 
\begin{equation}
\label{eq:2}
\boldsymbol{\theta^*} = \arg\min_{\boldsymbol{\theta} \in \boldsymbol{\Theta}} l(\boldsymbol{h(t, \theta), \hat{h})}
\end{equation}
where $\boldsymbol{\theta}$ is the parameter space. The parameter of this system $\boldsymbol{\theta_\delta}$ for the presence of small perturbation ($\boldsymbol{\delta}$) in measurements is obtained as
\begin{equation}
\label{eq:3}
\boldsymbol{\theta_\delta} = \arg\min_{\boldsymbol{\theta} \in \boldsymbol{\Theta}} l(\boldsymbol{h(t, \theta), \hat{h}-\delta)}
\end{equation}
Herein, the loss function $l(\boldsymbol{h(t, \theta), \hat{h}-\delta)}$
is hypothesized to be continuous with respect to small perturbation ($\boldsymbol{\delta}$). Our previous work \cite{wang2025systematic} proposed the definition of practical identifiability based on the changes in parameters resulting from measurement perturbations (Eqs. \ref{eq:2}-\ref{eq:3}) as follows:

\begin{definition} The parameter $\boldsymbol{\theta}$ in $\boldsymbol{\Theta}$ is practically identifiable if $~\forall \varepsilon > 0, \ \exists C > 0 \text{ such that } \|\boldsymbol{\delta}\| < \varepsilon \implies \|\boldsymbol{\theta_\delta} - \boldsymbol{\theta^*}\| < C\varepsilon
$ where $\boldsymbol{\theta^*}$ and $\boldsymbol{\theta_\delta}$ satisfy Eq. \ref{eq:2} and Eq. \ref{eq:3}, respectively.
\end{definition}

\subsection*{FIM and Hessian Matrix}
Using the Taylor extension, The loss $l(\boldsymbol{h}\boldsymbol{(t, \theta)}, \hat{\boldsymbol{h}})$ is approximately expressed as a second-order derivation as follows:
\begin{align}
\label{eq:4}
l\big(\boldsymbol{h}\boldsymbol{(t, \theta_\delta)}; \hat{\boldsymbol{h}} - \boldsymbol{\delta} \big) 
&\approx l\big(\boldsymbol{h}(\boldsymbol{t},\boldsymbol{\theta^*}); \hat{\boldsymbol{h}} - \boldsymbol{\delta} \big) 
+ \big( \nabla_\theta l(\boldsymbol{h}(\boldsymbol{t,\theta^*}); \hat{\boldsymbol{h}} - \boldsymbol{\delta}) \big)^{\top} \Delta \boldsymbol{\theta} \notag \\
&+ \tfrac{1}{2}\,\Delta\boldsymbol{\theta^\top} 
\big( \nabla_\theta^2 l(\boldsymbol{h(\theta^*); \hat{h} - \delta}) \big) \Delta\boldsymbol{\theta} .
\end{align}
The second-order derivation term $\nabla_\theta^2 l(\boldsymbol{h(\theta^*); \hat{h} - \delta})$ expands to perturbation $\boldsymbol{\delta}$ at $\boldsymbol{\delta=0}$ using Taylor expansion as:
\begin{align}
\label{eq:5}
\nabla_\theta^2 \, l(\boldsymbol{h(t,\theta^*); \hat{h} - \delta})
=& \sum_{k=0}^{\infty}\frac{(-1)^k}{k!}M_k[\boldsymbol{\delta}^{\otimes k}]  \nonumber \\
M_k[\boldsymbol{\delta}^{\otimes k}] =& \sum_{i=1}^{N}(\nabla_{\theta}\boldsymbol{h}(t_i,\boldsymbol{\theta^*}))^\top\big(\nabla^{k}_{\hat{h}}\nabla_h^2 l(\boldsymbol{h}(t_i,\boldsymbol{\theta^*});\boldsymbol{\hat{h}})\big)[\boldsymbol{\delta}^{\otimes k}]\nabla_{\theta}\boldsymbol{h}(t_i,\boldsymbol{\theta^*}) \nonumber\\
&+\sum_{i=1}^{N}\sum_{l=1}^{L} \big(\nabla_{\hat{h}}^k\frac{\partial l}{\partial h_l}(\boldsymbol{h}(t_i,\boldsymbol{\theta^*}); \boldsymbol{\hat{h}}\big))[\boldsymbol{\delta}^{\otimes k}] \cdot \nabla_\theta^2 h_l(t_i,\boldsymbol{\theta^*});\nonumber \\
\left( \nabla_{\hat{h}}^k \nabla_{\hat{h}}^2 l \right) [\boldsymbol{\delta}^{\otimes k}]
=& \left[
\sum_{j_1,\ldots,j_k}
\frac{\partial^{k+2} l}{\partial h_a \partial h_b \partial \hat{h}_{j_1} \cdots \partial \hat{h}_{j_k}}
\cdot \delta_{j_1} \cdots \delta_{j_k}
\right]_{a,b=1}^{L}; \nonumber \\
\left( \nabla_{\hat{h}}^k \frac{\partial l}{\partial h_i}(h, \hat{h}) \right) \left[ \boldsymbol{\delta}^{\otimes k} \right]
=& \sum_{j_1, \ldots, j_k = 1}^{L} 
\frac{\partial^{k+1} l}{\partial h_i \partial \hat{h}_{j_1} \cdots \partial \hat{h}_{j_k}} (\boldsymbol{h; \hat{h}}) 
\cdot \delta_{j_1} \cdots \delta_{j_k}.
\end{align}
Since $l$ is the least square loss, we just need to compute the $M_0$ and $M_1[\boldsymbol\delta]$ because of $M_k[\boldsymbol\delta]=\boldsymbol{0}~(k\geq2)$. When $k=0$, $M_0$ is the Hessian matrix as follows:
\begin{equation}
\label{eq:6}
M_0= \sum_{i=1}^{N}(\nabla_{\theta}\boldsymbol{h}(t_i,\boldsymbol{\theta^*}))^\top\big(\nabla_h^2l(\boldsymbol{h}(t_i,\boldsymbol{\theta^*});\boldsymbol{\hat{h}})\big)\nabla_{\theta}\boldsymbol{h}(t_i,\boldsymbol{\theta^*})+\sum_{i=1}^N\sum_{l=1}^L\frac{\partial l}{\partial h_l}(\boldsymbol{h}(t_i,\boldsymbol{\theta^*}); \boldsymbol{\hat{h}}) \cdot \nabla_\theta^2 h_l(t_i,\boldsymbol{\theta^*}).
\end{equation}
When $k=1$, $M_1[\boldsymbol{\delta}]$ is obtained as:
\begin{align}
\label{eq:7}
M_1[\boldsymbol{\delta}] =& \sum_{i=1}^N(\nabla_{\theta}\boldsymbol{h}(t_i,\boldsymbol{\theta^*}))^\top\big(\nabla_{\hat{h}}\nabla_h^2l(\boldsymbol{h}(t_i,\boldsymbol{\theta^*});\boldsymbol{\hat{h}})\big)[\boldsymbol{\delta}]\nabla_{\theta}\boldsymbol{h}(t_i,\boldsymbol{\theta^*}) \nonumber \\
&+\sum_{i=1}^N\sum_{l=1}^{L} \big(\nabla_{\hat{h}}\frac{\partial l}{\partial h_l}(\boldsymbol{h}(t_i,\boldsymbol{\theta^*}); \boldsymbol{\hat{h}}\big))[\boldsymbol{\delta}] \cdot \nabla_\theta^2 h_l(t_i,\boldsymbol{\theta^*})
\end{align}
Substitute the L-2 loss $l$ into $M_0$ and $M_1[\boldsymbol{\delta}]$, we have
\begin{align}
\label{eq:8}
M_0 &=\sum_{i=1}^{N}(\nabla_{\theta}\boldsymbol{h}(t_i,\boldsymbol{\theta^*}))^\top\nabla_{\theta}\boldsymbol{h}(t_i,\boldsymbol{\theta^*})+\sum_{i=1}^N\sum_{l=1}^L(h_l(t_i,\boldsymbol{\theta^*})-\hat{h}_l(t_i)) \cdot \nabla_\theta^2 h_l(t_i,\boldsymbol{\theta^*})\nonumber \\
M_1[\boldsymbol{\boldsymbol{\delta}}]&= \sum_{i=1}^N \sum_{l=1}^{L} (-\delta_l) \cdot \nabla_\theta^2 h_l(t_i,\boldsymbol{\theta^*}), 
\end{align}
and the second-order derivation term $\nabla_\theta^2 \, l(\boldsymbol{h(t,\theta^*); \hat{h} - \delta})
$ is obtained as:
\begin{align}
\label{eq:10}
\nabla_\theta^2 \, l(\boldsymbol{h(t,\theta^*); \hat{h} - \delta})
 =& \sum_{i=1}^{N}(\nabla_{\theta}\boldsymbol{h}(t_i,\boldsymbol{\theta^*}))^\top\nabla_{\theta}\boldsymbol{h}(t_i,\boldsymbol{\theta^*})+\sum_{i=1}^N \sum_{l=1}^{L} (-\delta_l) \cdot \nabla_\theta^2 h_l(t_i,\boldsymbol{\theta^*})\nonumber \\
&+\sum_{i=1}^N\sum_{l=1}^L(h_l(t_i,\boldsymbol{\theta^*})-\hat{h}_l(t_i)) \cdot \nabla_\theta^2 h_l(t_i,\boldsymbol{\theta^*}) . 
\end{align}
Denote the FIM of $l(h(\theta);\hat{h})$ as $F=M_0$, and the perturbation Hessian matrix  $\sum_{i=1}^N \sum_{l=1}^{L} (-\delta_l) \cdot \nabla_\theta^2 h_l(t_i,\boldsymbol{\theta^*})$ is approximately expressed as $\varepsilon H$ where $\|\boldsymbol{\delta}\|<\varepsilon$ and $H=\sum_{i=1}^N \sum_{l=1}^{L} \nabla_\theta^2 h_l(t_i,\boldsymbol{\theta^*})$. 

\subsection*{Asymptotic Analysis of Parameter Identifiability}
 In general, the model is assumed to fit the data with small residual $(\|\boldsymbol{h(t;\theta^*)}-\boldsymbol{\hat{h}(t)}\|\approx \boldsymbol{0})$ so that the second-order derivation term is simplified as $\nabla_\theta^2 \, l(\boldsymbol{h(t,\theta^*); \hat{h} - \delta})\approx F+\varepsilon H$. Having demonstrating the relationship between FIM and practical identifiability \cite{wang2025systematic}, we utilize Schur Complement Theory to provide an asymptotic analysis framework of practical identifiabilty. We introduce the definition of Schur complement matrix \cite{golub2013matrix} as follows:
\begin{definition}
\label{def:2}
Let \( Q \in \mathbb{R}^{k \times k} \) be a symmetric block matrix partitioned as
\[
Q = 
\begin{bmatrix}
A & B \\
B^\top & C
\end{bmatrix},
\quad A \in \mathbb{R}^{r \times r},\quad C \in \mathbb{R}^{(k - r) \times (k - r)}.
\]
If \( A \) is invertible, the Schur complement of the block \( A \) of the matrix $Q$ is defined by
\[
S _A:= C -  B^\top A^{-1}B .
\]
\noindent If \( C \) is invertible, then the Schur complement of the block \( C \) of the matrix Q is defined by
\[
S_C := A - B C^{-1} B^\top.
\]
\end{definition}
\noindent Building upon the theorem that the parameter $\boldsymbol{\theta}$ in $\boldsymbol{\Theta}$ is practically identifiable if and only if the FIM $F(\boldsymbol{\theta^*})$ is invertible \cite{wang2025systematic} and noting that both the FIM $(F)$ and the perturbation Hessian matrix $(H)$ are positive semi-definite when $\boldsymbol{\theta^*}$ satisfies Eq. \ref{eq:2}, we extend this framework to a more generalized formulation of $\varepsilon-$order practical identifiability as stated in in Theorem \ref{thm:1}. This extension is achieved by integrating Schur Complement Theory with eigenvalue decomposition (EVD) \cite{golub2013matrix}  to rigorously quantify practical identifiability.
\begin{theorem}
\label{thm:1}
The matrix $F+\varepsilon H$ can be approximately expressed to third-order of $\varepsilon$ as
\begin{equation}
\label{eq:10}
F+\varepsilon H \approx \hat{U} \begin{bmatrix}
\Sigma_{r_0} + B_0 & ~ & ~ &~ \\
~ & \varepsilon(\Sigma_{r_1} + B_1) & ~&~ \\
~ & ~ & \varepsilon^2 (\Sigma_{r_2} + B_2) & ~ \\
~ & ~ & ~ & \varepsilon^3 S_3 
\end{bmatrix}
\hat{U}^\top
\end{equation}
\noindent The practically identifiable  and non-identifiable parameters at multiple $\varepsilon$-order  are displayed as:

\noindent (1) Zero-order $\varepsilon~(O(1))$: the non-identifiable parameter is $U_{k-r_0}^{(0)\top}\boldsymbol{\theta}$
and the practically identifiable parameter is $U_{r_0}^{(0)\top}\boldsymbol{\theta}$;

\noindent (2) First-order $\varepsilon~(O(\varepsilon))$: the non-identifiable parameter is 
$U_{k-\sum_{i=0}^1 r_i}^{(1)\top} U_{k-r_0}^{(0)\top}\boldsymbol{\theta}$
, and the practically identifiable parameter is $[U_{r_0}^{(0)\top}\boldsymbol{\theta}+Z_1^\top U_{k-r_0}^{(0)\top}\boldsymbol{\theta}; U_{r_1}^{(1)\top}U_{k-r_0}^{(0)\top}\boldsymbol{\theta}]
$

\noindent (3) Second-order $\varepsilon~(O(\varepsilon^2))$: the non-identifiable parameter is 
$U_{k-\sum_{i=0}^2 r_i}^{(2)\top}U_{k-\sum_{i=0}^1 r_i}^{(1)\top} U_{k-r_0}^{(0)\top}\boldsymbol{\theta}$, and the practically identifiable parameter is 
\[[U_{r_0}^{(0)\top}\boldsymbol{\theta}+Z_1^\top U_{k-r_0}^{(0)\top}\boldsymbol{\theta}; U_{r_1}^{(1)\top}U_{k-r_0}^{(0)\top}\boldsymbol{\theta}+Z_2^\top U^{(1)\top}_{k-r_o-r_1}U_{k-r_0}^{(0)\top}\boldsymbol{\theta};U_{r_2}^{(2)\top}U_{k-\sum_{i=0}^1 r_i}^{(1)\top}U_{k-r_0}^{(0)\top}\boldsymbol{\theta}]
\]
\end{theorem}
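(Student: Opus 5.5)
The plan is to peel off one order of $\varepsilon$ at a time. At each stage we rotate into the eigenbasis of the current leading-order block, then use a Schur-complement block elimination (Definition~\ref{def:2}) to decouple the identifiable directions from the rest, which produces the next-order residual matrix.

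\textbf{Stage 0.} Since $F$ is symmetric positive semi-definite, take its EVD
\[
F=U^{(0)}\,\mathrm{diag}(\Sigma_{r_0},0)\,U^{(0)\top},\qquad U^{(0)}=[U^{(0)}_{r_0},U^{(0)}_{k-r_0}].
\]
In this basis we have
\[
U^{(0)\top}(F+\varepsilon H)U^{(0)}=\begin{bmatrix}\Sigma_{r_0}+\varepsilon H_{11} & \varepsilon H_{12}\\ \varepsilon H_{21} & \varepsilon H_{22}\end{bmatrix}.
\]
The top-left block is invertible for small $\varepsilon$. We then factor the matrix as $L\,\mathrm{diag}(A,S_A)\,L^\top$, with $L$ unit block-triangular. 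Its off-diagonal block is
\[
Z_1^\top\text{-type term}=\varepsilon H_{21}(\Sigma_{r_0}+\varepsilon H_{11})^{-1}=O(\varepsilon).
\]
The Schur complement is
\[
S_A=\varepsilon\bigl(H_{22}-\varepsilon H_{21}\Sigma_{r_0}^{-1}H_{12}+O(\varepsilon^2)\bigr).
\]
We set $B_0=\varepsilon H_{11}$, absorbing it into the zero-order block. The elimination says that the quadratic form $\Delta\theta^\top(F+\varepsilon H)\Delta\theta$ becomes
\[
\|U^{(0)\top}_{r_0}\Delta\theta+Z_1^\top U^{(0)\top}_{k-r_0}\Delta\theta\|^2_{A}+(U^{(0)\top}_{k-r_0}\Delta\theta)^\top S_A(\cdots).
\]
This yields the identifiable combination $U_{r_0}^{(0)\top}\theta+Z_1^\top U_{k-r_0}^{(0)\top}\theta$ stated in (2). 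Statement (1) follows directly from the identifiability iff invertibility theorem of \cite{wang2025systematic} applied to $F$.

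\textbf{Stage 1.} We write $S_A=\varepsilon G_1$ with $G_1=H_{22}+O(\varepsilon)$. Positive semi-definiteness of $F$ and $H$ gives $H_{22}\succeq 0$, so take the EVD $H_{22}=U^{(1)}\mathrm{diag}(\Sigma_{r_1},0)U^{(1)\top}$. We repeat the same block elimination on $G_1$ in the basis $U^{(1)}$. The $r_1$-block becomes $\Sigma_{r_1}+B_1$ with $B_1=O(\varepsilon)$, and the coupling matrix $Z_2$ is $O(\varepsilon)$. The new Schur complement on $U^{(1)}_{k-r_0-r_1}$ has a leading term of order $\varepsilon$, so it equals $\varepsilon\cdot\varepsilon G_2$. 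Here $G_2$ is the compression of $-H_{21}\Sigma_{r_0}^{-1}H_{12}$, plus any surviving $\varepsilon$-correction of $H_{22}$, onto $\ker H_{22}$. One more stage, with EVD of the leading part of $G_2$ giving $\Sigma_{r_2}$, produces $\varepsilon^2(\Sigma_{r_2}+B_2)$ and a remainder $\varepsilon^3S_3$.

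\textbf{Assembly.} Composing the three block-triangular congruences and the orthogonal rotations gives $\hat U$, a product of orthogonal factors and unit-triangular factors that differ from the identity by $O(\varepsilon)$. This yields (\ref{eq:10}) up to the stated order. The nested parameter combinations in (1)--(3) are read off as the coordinates in which the quadratic form is diagonal. At each order, a direction is identifiable exactly when its diagonal block is invertible, by the same criterion from \cite{wang2025systematic} applied order by order. It is non-identifiable when it lies in the kernel, which the products $U^{(2)\top}_{\cdots}U^{(1)\top}_{\cdots}U^{(0)\top}_{k-r_0}$ describe.

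\textbf{Main obstacle.} The hardest step is bookkeeping the $\varepsilon$-orders at Stage 1. I need to show that the kernel block of the first Schur complement is genuinely $O(\varepsilon^2)$. This requires checking that the cross terms $H_{12}$ restricted to $\ker H_{22}$ are controlled. Positive semi-definiteness of $\begin{bmatrix}F&\\&\end{bmatrix}+\varepsilon H$ should help, via the generalized Schur complement for PSD matrices. I also need to check that $\hat U$ remains approximately orthogonal, which is why the result is stated only as an approximation. I would first try to use PSD-ness to force $H_{12}U^{(1)}_{k-r_0-r_1}$ into the range needed, and otherwise simply carry the term into $G_2$ as above.
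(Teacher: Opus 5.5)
Your proposal follows the paper's proof essentially step for step. Both take the EVD of $F$ and perform the $LDL^\top$/Schur-complement elimination with the same $Z_1$ and $B_0=\varepsilon U_{r_0}^{(0)\top}HU_{r_0}^{(0)}$. Both truncate that Schur complement by a Neumann series to $\varepsilon(F_1-\varepsilon H_1)$, where $F_1=U_{k-r_0}^{(0)\top}HU_{k-r_0}^{(0)}$, and then run the same two further rounds; the leading part of your $G_2$ is the paper's $-F_2$, the compression of $H_1$ onto $\ker F_1$. Both then assemble $\hat U$ from the composed congruences.

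The order bookkeeping you single out as the main obstacle is automatic. Because $U^{(1)\top}_{r_1}F_1U^{(1)}_{k-r_0-r_1}=0$ and $U^{(1)\top}_{k-r_0-r_1}F_1U^{(1)}_{k-r_0-r_1}=0$, both the kernel block and the coupling block of the first Schur complement are already $O(\varepsilon^2)$. The second elimination therefore changes the kernel block only at $O(\varepsilon^3)$, with no control of $H_{12}$ needed. If you did invoke $H\succeq 0$ at that step, it would force $H_{12}U^{(1)}_{k-r_0-r_1}=0$, hence $F_2=0$ and an empty second-order block. The paper does not use positivity there; it simply carries $H_1$ along.
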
 
\noindent Details of the proof including the definitions of matrices in Eq.~\ref{eq:10} are shown in the “Proof of Theorem \ref{thm:1}” section in Supplementary Materials. Meanwhile, we can obtain a  corollary on practical identifiability as below:
\begin{corollary}
\label{cy:1}
If H is a positive definite matrix in the null space of \( F \), i.e., \( H \in \text{null}(F) \setminus \{0\} \), the parameter $\boldsymbol{\theta}$ in $\boldsymbol{\Theta}$ is practically identifiable.
\end{corollary}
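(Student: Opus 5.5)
The plan is to read ``$H$ positive definite in the null space of $F$'' as: the compression $U_{k-r_0}^{(0)\top} H\, U_{k-r_0}^{(0)}$ is positive definite. Here $U_{k-r_0}^{(0)}$ is the orthonormal basis of $\mathrm{null}(F)$ coming from the eigenvalue decomposition $F = U^{(0)}\,\mathrm{diag}(\Sigma_{r_0},0)\,U^{(0)\top}$. Under this reading, $F+\varepsilon H$ is invertible for all small $\varepsilon>0$. The expansion of Theorem~\ref{thm:1} then stops at first order: every coordinate of $\boldsymbol{\theta}$ is either zero-order or first-order identifiable, and there is no residual non-identifiable block. Practical identifiability follows from the invertibility criterion of \cite{wang2025systematic}, applied to the perturbed Hessian $F+\varepsilon H$ in place of $F$.

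\textbf{Step 1: block form.} In the basis $U^{(0)}$ write
\[
U^{(0)\top}(F+\varepsilon H)U^{(0)}=\begin{bmatrix}\Sigma_{r_0}+\varepsilon H_{11} & \varepsilon H_{12}\\ \varepsilon H_{12}^\top & \varepsilon H_{22}\end{bmatrix},
\]
where $H_{22}=U_{k-r_0}^{(0)\top}HU_{k-r_0}^{(0)}$ is positive definite by hypothesis.

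\textbf{Step 2: Schur complement.} Since $\Sigma_{r_0}$ is invertible, the upper-left block is invertible for small $\varepsilon$. By Definition~\ref{def:2} its Schur complement is
\[
S=\varepsilon H_{22}-\varepsilon^2 H_{12}^\top(\Sigma_{r_0}+\varepsilon H_{11})^{-1}H_{12}=\varepsilon\bigl(H_{22}+O(\varepsilon)\bigr).
\]
This is exactly the first-order block $\varepsilon(\Sigma_{r_1}+B_1)$ of Theorem~\ref{thm:1}, with $r_1=k-r_0$. Consequently the set $U_{k-r_0-r_1}^{(1)}$ is empty, and both the second-order block and $S_3$ vanish. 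The Schur determinant formula gives
\[
\det(F+\varepsilon H)=\det(\Sigma_{r_0}+\varepsilon H_{11})\,\varepsilon^{k-r_0}\det\bigl(H_{22}+O(\varepsilon)\bigr),
\]
which is nonzero for $0<\varepsilon<\varepsilon_0$.

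\textbf{Step 3: bounded response.} Linearize the optimality condition for $\boldsymbol{\theta_\delta}$ from Eq.~\ref{eq:4} as $(F+\varepsilon H)\Delta\boldsymbol\theta=-\nabla_\theta l(\boldsymbol{\theta^*};\hat{\boldsymbol h}-\boldsymbol\delta)$. The right-hand side equals $\sum_i(\nabla_\theta \boldsymbol h)^\top\boldsymbol\delta$ plus small residual terms, so it is $O(\varepsilon)$. Moreover this gradient lies in $\mathrm{range}(F)$, because the column space of $\sum_i(\nabla_\theta \boldsymbol h)^\top$ equals $\mathrm{range}(F)$. Decompose $\Delta\boldsymbol\theta$ into its components along $U_{r_0}^{(0)}$ and $U_{k-r_0}^{(0)}$ and use the block inverse built from the Schur complement. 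The $\varepsilon^{-1}H_{22}^{-1}$ factor only acts on the null-space component of the forcing. That component is zero at leading order and only $O(\varepsilon^2)$ after coupling through $\varepsilon H_{12}^\top$. Hence both components of $\Delta\boldsymbol\theta$ are $O(\varepsilon)$, i.e.\ $\|\boldsymbol{\theta_\delta}-\boldsymbol{\theta^*}\|\le C\varepsilon$, which is the practical identifiability of Definition 1.

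\textbf{Main obstacle.} The hard part is Step 3: controlling the $\varepsilon^{-1}$ amplification in the null-space block. I would handle it by tracking orders explicitly through the block inverse. The key fact is that the leading forcing lies in $\mathrm{range}(F)$, which must be checked carefully. If that bookkeeping fails, I would fall back on simply citing the invertibility criterion of \cite{wang2025systematic}, applied to the full Hessian $F+\varepsilon H$ as in Theorem~\ref{thm:1}.
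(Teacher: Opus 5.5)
Your Steps 1--2 are essentially the paper's proof, with a different pivot block.

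\textbf{Steps 1--2.} Both arguments work in the eigenbasis of $F$ under the same reading of the hypothesis, $U_{k-r_0}^{(0)\top}HU_{k-r_0}^{(0)}\succ 0$. Both use a Schur complement to show that $F+\varepsilon H$ is nonsingular (indeed positive definite) for $0<\varepsilon<\varepsilon_0$, and then rely on the invertibility criterion of \cite{wang2025systematic}. The paper takes the complement of the bottom-right block $\varepsilon H_{22}$ and shows $\Sigma_{r}+\varepsilon H_{11}-\varepsilon H_{12}H_{22}^{-1}H_{12}^\top\succ 0$. (Its displayed factors $\varepsilon^2$ and $\varepsilon^3$ should both be $\varepsilon$, which is harmless.) You take the complement of the $\Sigma_{r_0}$ block instead. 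That matches $S_1=\varepsilon\Phi_1$ in Theorem~\ref{thm:1}, and it gives you two extra facts: the hierarchy closes at first order ($r_1=k-r_0$), and $\det(F+\varepsilon H)$ scales like $\varepsilon^{k-r_0}$.

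\textbf{Step 3 has a genuine gap.} It should be dropped rather than patched. Your order counting for the \emph{linear} system is correct. The problem is that $(F+\varepsilon H)\Delta\boldsymbol{\theta}=-\nabla_\theta l$ is not the right leading-order problem in the null block.
\begin{itemize}
\item In the null block, the terms you retain are $O(\varepsilon^2)$.
\item The discarded term $\tfrac12\nabla_\theta^3 l(\boldsymbol{\theta^*})[\,\cdot\,,\Delta\boldsymbol{\theta},\Delta\boldsymbol{\theta}]$, projected onto $\mathrm{null}(F)$, contains (up to normalization) $\sum_j (s\Delta\boldsymbol{\theta})_j\,\nabla_\theta^2 h_j\,\Delta\boldsymbol{\theta}$.
\item Since $s\Delta\boldsymbol{\theta}=O(\varepsilon)$, this discarded term is also $O(\varepsilon^2)$. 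In fact $s\Delta\boldsymbol{\theta}\approx-P\boldsymbol{\delta}$, where $P$ is the projector onto $\mathrm{range}(s)$. So it cancels the part of the null-space curvature $\varepsilon H_{22}$ that comes from $P\boldsymbol{\delta}$.
\end{itemize}

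\textbf{A concrete counterexample.} Take $h_1=a+b^2$, $h_2=-b^2/2$, $\hat{\boldsymbol h}=0$ and $\boldsymbol{\theta^*}=0$.
\begin{itemize}
\item Then $F=\mathrm{diag}(1,0)$ and $H=\mathrm{diag}(0,1)$, so $H$ is positive definite on $\mathrm{null}(F)$.
\item For data $\hat{\boldsymbol h}-\boldsymbol{\delta}$ with $\boldsymbol{\delta}=(\varepsilon,\varepsilon)$, the exact Hessian at $\boldsymbol{\theta^*}$ (in the paper's normalization) is exactly $F+\varepsilon H=\mathrm{diag}(1,\varepsilon)$.
\item Your Step 3 predicts $\Delta\boldsymbol{\theta}=(-\varepsilon,0)$. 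That point is a saddle of the true loss: its Hessian there is $\mathrm{diag}(1,-\varepsilon)$.
\item The perturbed loss $(a+b^2+\varepsilon)^2+(\varepsilon-b^2/2)^2$ is minimized at $(a,b)=(-3\varepsilon,\pm\sqrt{2\varepsilon})$, so $\|\boldsymbol{\theta_\delta}-\boldsymbol{\theta^*}\|\sim\sqrt{\varepsilon}$.
\end{itemize}
Hence the bound $C\varepsilon$ cannot be extracted from the quadratic model by any bookkeeping. The step from nonsingular $F+\varepsilon H$ to practical identifiability has to be imported from the criterion of \cite{wang2025systematic}. That is your fallback, and it is exactly where the paper's proof stops.
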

\noindent Details of the proof are shown in the “Proof of Corollary \ref{cy:1}” section in Supplementary Materials.
\subsection*{Asymptotic Analysis of Coordinate Practical Identifiability}
Coordinate practical identifiability is defined using the Bayesian posterior likelihood \cite{wieland2021structural,raue2009structural,raue2013joining} as follows:
\begin{definition}
\label{def:3}
The parameter $\boldsymbol{\theta}$ is coordinate practically identifiable if the profile likelihood $PL(\boldsymbol{\hat{h}} \mid \theta_i) = \min_{\theta_{j \neq i}} \big[l(\boldsymbol{\hat{h}}; \boldsymbol{\theta})\big]
$ has a locally unique minimum at $\theta_i^*$ for each parameter coordinate $i$.
\end{definition}
\noindent Since both the FIM $(F)$ and the perturbation Hessian matrix $(H)$ are semi-positive definite under an $L_2$ loss regime, the fundamental nature of  Definition \ref{def:3} is as follows: the minimization of loss function $\mathcal{L}(\theta_i,\boldsymbol{\tilde{\theta}};\varepsilon)=\Delta \boldsymbol{\theta}^\top (F+\varepsilon H)\Delta\boldsymbol{\theta}$ strictly exceeds zero-specially, $\min_{\boldsymbol{\tilde{\theta}}}\mathcal{L}(\theta_i,\boldsymbol{\tilde{\theta}};\varepsilon)>0$ when the parameter $\boldsymbol{\theta}$ is rewritten as $\boldsymbol{\theta} = [\boldsymbol{\tilde{\theta}};\theta_i]$. First, we perform the permutation to the matrix $F$ and $H$ for $\boldsymbol{\theta} = [\boldsymbol{\tilde{\theta}};\theta_i]$ as:
\begin{equation}
\label{eq:11}
\hat{F} = P_i^\top F P,~\hat{H}=P_i^\top H P.
\end{equation}
where $P_i$=$[e_1,e_2,\cdots,e_{i-1},e_k,e_{i+1},\cdots,e_{k-1},e_i]$  is the elementary matrix and the vector $e_i$ is unit vector. The block matrix for $\hat{F}$ and $\hat{H}$ are rewritten as
\begin{equation}
\label{eq:12}
\hat{F} = \begin{bmatrix}
\hat{F}_{11} & \hat{F}_{21}^{\top} \\
\hat{F}_{21} & \hat{F}_{22}
\end{bmatrix},\quad
\hat{H} = \begin{bmatrix}
\hat{H}_{11} & \hat{H}_{21}^{\top} \\
\hat{H}_{21} & \hat{H}_{22}
\end{bmatrix}
\end{equation}
where $\hat{F}_{11},\hat{H}_{11}\in \mathbb{R}^{(k-1)\times(k-1)}$,$\hat{F}_{21},\hat{H}_{21}\in \mathbb{R}^{k-1}$,$\hat{F}_{22},\hat{H}_{22}\in \mathbb{R}$.
Then, the loss $\mathcal{L}(\theta_i,\boldsymbol{\tilde{\theta}};\varepsilon) $ is rewritten as
\begin{equation}
\mathcal{L}(\theta_i,\boldsymbol{\tilde{\theta}};\varepsilon) = a(\varepsilon)\theta_i^2 + 2\theta_i b^\top(\varepsilon)\boldsymbol{\tilde{\theta}}+ \boldsymbol{\tilde{\theta}}^\top c(\varepsilon)\boldsymbol{\tilde{\theta}}
\end{equation}
where 
\begin{equation}
a(\varepsilon) = \hat{F}_{22}+\varepsilon \hat{H}_{22},~b(\varepsilon) = \hat{F}_{21}+\varepsilon \hat{H}_{21},~c(\varepsilon) = \hat{F}_{11}+\varepsilon \hat{H}_{11}.
\end{equation}
Then, the minimization of $\mathcal{L}(\theta_i,\boldsymbol{\tilde{\theta}};\varepsilon)$ to $\boldsymbol{\tilde{\boldsymbol{\theta}}}$ satisfies the necessary condition:
\begin{equation}
2b(\varepsilon)\theta_i+2c(\varepsilon)\tilde{\boldsymbol{\theta}} = 0,~\tilde{\boldsymbol{\theta}} = - c(\varepsilon)^{\dagger}b(\varepsilon)\theta_i
\end{equation}
We can obtain the loss $\mathcal{L}(\theta_i;\boldsymbol{\tilde{\theta}},\varepsilon)$ as
\begin{equation}
\mathcal{L}(\theta_i;\boldsymbol{\tilde{\theta}},\varepsilon) = \big(\hat{F}_{22}+\varepsilon \hat{H}_{22}-(\hat{F}_{21}+\varepsilon \hat{H}_{21})(\hat{F}_{11}+\varepsilon \hat{H}_{11})^{\dagger}(\hat{F}_{21}+\varepsilon \hat{H}_{21})^\top\big)\theta_i^2
\end{equation}
Next, we perform the EVD for matrix $\hat{F}_{11}$ and rewrite the matrix $\hat{H}_{11}$ as:
\begin{equation}
\hat{F}_{11} = U
\begin{bmatrix}
 \Lambda & 0\\
 0&0
\end{bmatrix} U^\top, ~
\hat{H}_{11} = U
\begin{bmatrix}
 h_{11} & h_{12}\\
 h_{12}^\top & h_{22}
\end{bmatrix} U^\top
\end{equation}
When $\varepsilon$ is small enough, the inverse of matrix $\Lambda+\varepsilon h_{11}$ exists. We can find the lower triangular matrix 
\begin{equation}
L = \begin{bmatrix}
    I & 0 \\
    \varepsilon h_{12}^\top (\Lambda+\varepsilon h_{11})^{-1}& I
\end{bmatrix}
\end{equation}
to make the matrix $\hat{F}_{11}+\varepsilon \hat{H}_{11}$ as follows:
\begin{equation}
\hat{F}_{11}+\varepsilon \hat{H}_{11} = UL \begin{bmatrix}
    \Lambda+\varepsilon h_{11} & 0 \\
    0 & S
\end{bmatrix}L^\top U^\top,~
S = \varepsilon h_{22} -\varepsilon^2 h_{12}^\top (\Lambda_r+\varepsilon h_{11})^{-1}h_{12}
\end{equation}
So, we can compute the generalized inverse of $\hat{F}_{11}+\varepsilon \hat{H}_{11}$ as:
\begin{align}
  &(\hat{F}_{11}+\varepsilon \hat{H}_{11})^\dagger = U(L^\top)^{-1} \begin{bmatrix}
    (\Lambda+\varepsilon h_{11})^{-1} & 0 \\
    0 & S^\dagger
\end{bmatrix}L^{-1}U^\top \nonumber \\
&= U\begin{bmatrix}
    I & -\varepsilon(\Lambda+\varepsilon h_{11})^{-1}h_{12} \\
    0 & I
\end{bmatrix}
\begin{bmatrix}
     (\Lambda+\varepsilon h_{11})^{-1}& 0 \\
    0 & S^\dagger
\end{bmatrix}
\begin{bmatrix}
    I & 0 \\
    -\varepsilon(\Lambda+\varepsilon h_{11})^{-1}h_{12} & I
\end{bmatrix} U^\top \nonumber \\
&= U
\begin{bmatrix}
 (\Lambda+\varepsilon h_{11})^{-1}+(\Lambda+\varepsilon h_{11})^{-1}(\varepsilon h_{12})S^\dagger (\varepsilon h_{12}^\top)(\Lambda+\varepsilon h_{11})^{-1} & -(\Lambda+\varepsilon h_{11})^{-1}(\varepsilon h_{12})S^\dagger\\
 -S^\dagger (\varepsilon h_{12}^\top)(\Lambda+\varepsilon h_{11})^{-1}& S^\dagger
\end{bmatrix} U^\top 
\end{align}
\noindent When we just consider the aprroximation using the Neumann series as:
\begin{align}
&(\Lambda+\varepsilon h_{11})^{-1} \approx \Lambda^{-1}-\varepsilon \Lambda^{-1} h_{11} \Lambda^{-1}+\varepsilon^2\Lambda^{-1} h_{11}\Lambda^{-1} h_{11}\Lambda^{-1}  = \tilde{\Lambda}^{-1} 
\end{align}
\noindent, we approximate the generalized inverse of $\hat{F}_{11}+\varepsilon \hat{H}_{11}$ as
\begin{equation}
(\hat{F}_{11}+\varepsilon \hat{H}_{11})^\dagger \approx U
\begin{bmatrix}
 \tilde{\Lambda}^{-1}+\tilde{\Lambda}^{-1}(\varepsilon h_{12})S^\dagger (\varepsilon h_{12}^\top)\tilde{\Lambda}^{-1} & -\tilde{\Lambda}^{-1}(\varepsilon h_{12})S^\dagger\\
 -S^\dagger (\varepsilon h_{12}^\top)\tilde{\Lambda}^{-1}& S^\dagger
\end{bmatrix} U^\top,
\end{equation}
and the matrix $S$ is approximated on the second-order of $\varepsilon$ as follows:
\begin{equation}
S \approx \varepsilon h_{22} -\varepsilon^2 h_{12}^\top \Lambda^{-1}h_{12}
\end{equation}
and the generalized inverse of S is approximately expressed as
\begin{equation}
S^\dagger = \frac{1}{\varepsilon}(h_{22}-\varepsilon h_{12}^\top\Lambda^{-1}h_{12})^{\dagger} \approx \frac{1}{\varepsilon} h^\dagger_{22}+ h^\dagger_{22}(h_{12}^\top\Lambda^{-1}h_{12})h^\dagger_{22}.
\end{equation}
The loss function $\mathcal{L}(\theta_i;\boldsymbol{\tilde{\theta}},\varepsilon) $ is rewritten as
\begin{equation}
\mathcal{L}(\theta_i;\tilde{\theta},\varepsilon) \approx \big(\mathcal{K}_0 + \mathcal{K}_1\varepsilon + \mathcal{K}_2\varepsilon^2\big)\theta_i^2
\end{equation}
Denoted as $[G_1,G_2] = [\hat{F}_{21}U_r,\hat{F}_{21}U_{k-r}],~[K_1,K_2]=[\hat{H}_{21}U_r,\hat{H}_{21}U_{k-r}]$, we note $G_2 = 0$ since the null space of $\hat{F}_{12}$ is the same as $\hat{F}_{11}$ ($\hat{F}_{22} = s_i^\top s_i,~\hat{F}_{21} = s_i^\top A,~\hat{F}_{11} = A^\top A,$ where $s_i$ is the $i^{th}$ column of sensitive matrix $\boldsymbol s(\boldsymbol{\theta^*}) = \big[\nabla_\theta \boldsymbol{h}(t_1,\boldsymbol{\theta^*})^\top;\nabla_\theta \boldsymbol{h}(t_2,\boldsymbol{\theta^*})^\top;\cdots;\nabla_\theta \boldsymbol{h}(t_N,\boldsymbol{\theta^*})^\top\big]$ and $A=[s_1,s_2,\cdots,s_{i-1},s_k,s_{i+1},\cdots,s_{k-1}]$ \cite{wang2025systematic}) so that the following equation is shown as:
\begin{align}
\label{eq:26}
&\hat{F}_{22}+\varepsilon \hat{H}_{22}-(\hat{F}_{21}+\varepsilon \hat{H}_{21})(\hat{F}_{11}+\varepsilon \hat{H}_{11})^{\dagger}(\hat{F}_{21}+\varepsilon \hat{H}_{21})^\top \nonumber  \\ 
 &= \hat{F}_{22}+\varepsilon \hat{H}_{22} - \big[(G_1+\varepsilon K_1) (\tilde{\Lambda}^{-1}+\varepsilon^2\tilde{\Lambda}^{-1}h_{12}S^\dagger h_{12}^\top\tilde{\Lambda}^{-1})(G_1+\varepsilon K_1)^\top  \nonumber \\
& -\varepsilon^2 K_2 S^\dagger h_{12}^\top \tilde{\Lambda}^{-1} (G_1+\varepsilon K_1)^\top - \varepsilon^2 (G_1 + \varepsilon K_1) \tilde{\Lambda}^{-1} h_{12} S^\dagger K_2^\top+\varepsilon^2 K_2 S^\dagger K_2^\top 
\big] 
\end{align}
where $S^\dagger$ and $\tilde{\Lambda}^{-1}$ are shown as
\begin{align}
\label{eq:27}
&S^\dagger \approx \frac{1}{\varepsilon} h^\dagger_{22}+ h^\dagger_{22}(h_{12}^\top\Lambda^{-1}h_{12})h^\dagger_{22}, \nonumber \\ 
&\tilde{\Lambda}^{-1} = \Lambda^{-1}-\varepsilon \Lambda^{-1} h_{11} \Lambda^{-1} +\varepsilon^2\Lambda^{-1} h_{11}\Lambda^{-1} h_{11}\Lambda^{-1}
\end{align}
Now, we can quantify the coordinate practical identifiability for a specific parameter $\theta_i$ using the metric $(\mathcal{K}_i)$ for a given $\varepsilon$-order under Theorem \ref{thm:2} .
\begin{theorem}
\label{thm:2}
Based on Eqs.~\ref{eq:26} and ~\ref{eq:27}, higher-order coordinate practical identifiability is quantified by the following three indices:

\noindent (1) Zero-order $\varepsilon~(O(1))$, 
\begin{equation}
\label{eq:28}
\mathcal{K}_0 = \hat{F}_{22}-G_1
\Lambda^{-1} G_1^\top = \hat{F}_{22}-\hat{F}_{21}
F_{11}^{\dagger}\hat{F}^{\top}_{21}.
\end{equation}

\noindent (2) First-order $\varepsilon~(O(\varepsilon))$, 
\begin{align}
\label{eq:29}
\mathcal{K}_1 
&= \hat{H}_{22} - (K_1\Lambda^{-1}G_1^\top + G_1 \Lambda^{-1}K_1^\top +K_2 h_{22}^\dagger K_2^\top + G_1 \Lambda^{-1} h_{12} h_{22}^\dagger h_{12}^\top \Lambda^{-1} G_1^\top) \nonumber \\
&+ G_1\Lambda^{-1}h_{11}\Lambda^{-1}G_1^\top + K_2 h_{22}^\dagger h_{12}^\top \Lambda^{-1}G_1^\top +G_1 \Lambda^{-1} h_{12} h_{22}^\dagger K_2^\top 
\end{align}

\noindent (3) Second-order $\varepsilon^2~(O(\varepsilon^2))$, 
\begin{align}
\mathcal{K}_2 =& - \big\{K_1\Lambda^{-1}K_1^\top - G_1\Lambda^{-1}h_{11}\Lambda^{-1}K_1^\top - K_1\Lambda^{-1}h_{11}\Lambda^{-1}G_1^\top +G_1\Lambda^{-1}h_{11}\Lambda^{-1}h_{11}\Lambda^{-1}G_1^\top \nonumber\\
& + (G_1\Lambda^{-1}h_{12}-K_2)[h_{22}^\dagger(h_{12}^\top \Lambda^{-1} h_{12})h_{22}^\dagger](G_1\Lambda^{-1}h_{12}-K_2)^\top \nonumber \\
& + [(K_1\Lambda^{-1}-G_1\Lambda^{-1}h_{11}\Lambda^{-1})h_{12}h_{22}^\dagger (G_1\Lambda^{-1}h_{12}-K_2)^\top)] \nonumber \nonumber \\
& + [(G_1\Lambda^{-1}h_{12}-K_2)h_{22}^\dagger h_{12}^\top(K_1\Lambda^{-1}-G_1\Lambda^{-1}h_{11}\Lambda^{-1})^\top)]
\big\}
\end{align}
\end{theorem}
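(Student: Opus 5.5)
The plan is a direct power-series bookkeeping argument. Substitute the approximations of Eq.~\ref{eq:27} into the Schur-complement expression of Eq.~\ref{eq:26}, expand in $\varepsilon$, and collect the coefficients of $\varepsilon^0$, $\varepsilon^1$ and $\varepsilon^2$. These coefficients are, by definition, $\mathcal{K}_0$, $\mathcal{K}_1$ and $\mathcal{K}_2$ in $\mathcal{L}(\theta_i;\tilde{\theta},\varepsilon)\approx(\mathcal{K}_0+\mathcal{K}_1\varepsilon+\mathcal{K}_2\varepsilon^2)\theta_i^2$.

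Two structural facts from the derivation preceding the statement will be used repeatedly.
\begin{itemize}
\item $G_2=\hat{F}_{21}U_{k-r}=0$, because $\hat{F}_{21}=s_i^\top A$ and $\hat{F}_{11}=A^\top A$ share a null space. Hence only $G_1$ and the $H$-blocks $K_1,K_2$ appear.
\item $S^\dagger$ carries a $1/\varepsilon$ singularity, so every term multiplied by $\varepsilon^2 S^\dagger$ drops one order.
\end{itemize}

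\textbf{Order $\varepsilon^0$.} Before organizing the higher orders, I will write $S^\dagger\approx\varepsilon^{-1}h_{22}^\dagger+D$ with $D=h_{22}^\dagger(h_{12}^\top\Lambda^{-1}h_{12})h_{22}^\dagger$. I will also write $\tilde{\Lambda}^{-1}=\Lambda^{-1}-\varepsilon E_1+\varepsilon^2E_2$ with $E_1=\Lambda^{-1}h_{11}\Lambda^{-1}$ and $E_2=\Lambda^{-1}h_{11}\Lambda^{-1}h_{11}\Lambda^{-1}$. At order zero only $\hat{F}_{22}-G_1\Lambda^{-1}G_1^\top$ survives. To identify this with $\hat{F}_{22}-\hat{F}_{21}\hat{F}_{11}^\dagger\hat{F}_{21}^\top$, I note that $\hat{F}_{11}^\dagger=U_r\Lambda^{-1}U_r^\top$ and use $G_2=0$. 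This gives the second form in Eq.~\ref{eq:28}; it also equals $\|(I-AA^\dagger)s_i\|^2$, consistent with the earlier $\mathcal{K}_0$.

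\textbf{Order $\varepsilon^1$.} I will collect contributions term by term from Eq.~\ref{eq:26}:
\begin{itemize}
\item $\varepsilon\hat{H}_{22}$;
\item the cross terms $-(K_1\Lambda^{-1}G_1^\top+G_1\Lambda^{-1}K_1^\top)$ from expanding $(G_1+\varepsilon K_1)\Lambda^{-1}(G_1+\varepsilon K_1)^\top$;
\item $+G_1E_1G_1^\top$ from the $-\varepsilon E_1$ piece of $\tilde\Lambda^{-1}$;
\item from the four $\varepsilon^2S^\dagger$ terms, using the $\varepsilon^{-1}h_{22}^\dagger$ part only, the contributions $-G_1\Lambda^{-1}h_{12}h_{22}^\dagger h_{12}^\top\Lambda^{-1}G_1^\top$, $+K_2h_{22}^\dagger h_{12}^\top\Lambda^{-1}G_1^\top$, $+G_1\Lambda^{-1}h_{12}h_{22}^\dagger K_2^\top$ and $-K_2h_{22}^\dagger K_2^\top$.
\end{itemize}
Summing these reproduces Eq.~\ref{eq:29}.

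\textbf{Order $\varepsilon^2$, the main obstacle.} This step is purely combinatorial but error-prone, so I will organize it by grouping. First, $\varepsilon^2$-terms from the quadratic form with $\tilde\Lambda^{-1}$ give $-(K_1\Lambda^{-1}K_1^\top-G_1E_1K_1^\top-K_1E_1G_1^\top+G_1E_2G_1^\top)$. Second, the four $S^\dagger$-terms combine exactly as $-(G_1\tilde\Lambda^{-1}h_{12}-K_2-\varepsilon K_1\tilde\Lambda^{-1}h_{12})\,\varepsilon^2S^\dagger\,(\cdots)^\top$, up to the sign conventions in Eq.~\ref{eq:26}. I will factor them this way, with $W_0=G_1\Lambda^{-1}h_{12}-K_2$ and $W_1=(K_1\Lambda^{-1}-G_1E_1)h_{12}$. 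The $\varepsilon^2$ coefficient is then $-[W_0DW_0^\top+W_1h_{22}^\dagger W_0^\top+W_0h_{22}^\dagger W_1^\top]$. Here $D$ pairs with $\varepsilon^2$ and $h_{22}^\dagger/\varepsilon$ pairs with the $O(\varepsilon^3)$ cross terms $\varepsilon^3 W_1$. I would verify the signs of $W_1$ carefully against the stated formula for $\mathcal{K}_2$.

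Throughout, I drop all terms of order $\varepsilon^3$ and higher, and I treat the truncations in Eq.~\ref{eq:27} as exact to the needed order. Finally, I would note that this is legitimate because the neglected remainders of $\tilde\Lambda^{-1}$ and $S^\dagger$ are $O(\varepsilon^3)$ and $O(\varepsilon)$ respectively. After multiplication by the accompanying powers of $\varepsilon$, they do not affect the coefficients up to $\varepsilon^2$.
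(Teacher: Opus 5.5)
Your proposal is correct and follows the same route as the paper: it substitutes Eq.~\ref{eq:27} into Eq.~\ref{eq:26}, uses $G_2=0$ and the $\varepsilon^{-1}$ singularity of $S^\dagger$, and collects the $\varepsilon^0,\varepsilon^1,\varepsilon^2$ coefficients. Your $W_0,W_1$ grouping mirrors the regrouping the paper later uses in the proofs of Corollaries~\ref{cy:3} and~\ref{cy:4}. One small fix: the four $S^\dagger$ terms combine as $-\varepsilon^2\bigl((G_1+\varepsilon K_1)\tilde{\Lambda}^{-1}h_{12}-K_2\bigr)S^\dagger\bigl(\cdots\bigr)^\top$, with $+\varepsilon K_1\tilde{\Lambda}^{-1}h_{12}$ rather than $-$. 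This is exactly what your definition $W_1=(K_1\Lambda^{-1}-G_1E_1)h_{12}$ already assumes, and it reproduces the stated $\mathcal{K}_2$.
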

\noindent Moreover, we propose three corollaries on the three indices $(\mathcal{K}_i,i=0,1,2)$ as follows:
\begin{corollary}
\label{cy:2}
The metric $\mathcal{K}_0$ is equivalent with the metric $\|(I-AA^{\dagger}s_i)\|_2^2$ where $s_i$ is the $i^{th}$ column of sensitive matrix $\boldsymbol s(\boldsymbol{\theta^*}) = \big[\nabla_\theta \boldsymbol{h}(t_1,\boldsymbol{\theta^*})^\top;\nabla_\theta \boldsymbol{h}(t_2,\boldsymbol{\theta^*})^\top;\cdots;\nabla_\theta \boldsymbol{h}(t_N,\boldsymbol{\theta^*})^\top\big]$ and $A=[s_1,s_2,\cdots,s_{i-1},s_k,s_{i+1},\cdots,s_{k-1}]$\cite{wang2025systematic}.
\end{corollary}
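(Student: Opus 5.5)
Since $\hat F = P_i^\top F P_i$ is just the Gram matrix of the permuted sensitivity matrix $[A, s_i]$, the corollary is a statement about a Gram matrix. The approach is to write every block of $\hat F$ in terms of $A$ and $s_i$, namely
\[
\hat F_{11}=A^\top A,\qquad \hat F_{21}=s_i^\top A,\qquad \hat F_{22}=s_i^\top s_i,
\]
exactly as recorded just before Eq.~\ref{eq:26}. Substituting these into the second expression for $\mathcal{K}_0$ in Eq.~\ref{eq:28} gives
\[
\mathcal{K}_0 = s_i^\top s_i - s_i^\top A\,(A^\top A)^{\dagger}A^\top s_i .
\]

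The key step is the standard pseudoinverse identity $A(A^\top A)^{\dagger}A^\top = AA^{\dagger}$. I would prove it from the SVD $A=V\Sigma W^\top$: both sides equal $V_r V_r^\top$, where $V_r$ collects the left singular vectors belonging to nonzero singular values. This makes $P:=AA^\dagger$ the orthogonal projector onto $\mathrm{range}(A)$, so it is symmetric and idempotent. Hence $I-P$ is also a symmetric idempotent, and
\[
\mathcal{K}_0 = s_i^\top (I-P)s_i = s_i^\top (I-P)^\top (I-P)s_i = \|(I-AA^\dagger)s_i\|_2^2 .
\]
This is the claimed metric; the parenthesization $\|(I-AA^\dagger s_i)\|$ in the statement should be read as $\|(I-AA^\dagger)s_i\|$.

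The remaining point is to check that the first expression in Eq.~\ref{eq:28}, $\hat F_{22}-G_1\Lambda^{-1}G_1^\top$, coincides with $\hat F_{22}-\hat F_{21}\hat F_{11}^\dagger\hat F_{21}^\top$. Write the eigendecomposition $\hat F_{11}=U_r\Lambda U_r^\top$, so that $\hat F_{11}^\dagger = U_r\Lambda^{-1}U_r^\top$. Then
\[
\hat F_{21}\hat F_{11}^\dagger \hat F_{21}^\top = (\hat F_{21}U_r)\Lambda^{-1}(\hat F_{21}U_r)^\top = G_1\Lambda^{-1}G_1^\top .
\]
For this, and for the use of $G_2=0$ in Eq.~\ref{eq:26}, I would also verify $G_2=\hat F_{21}U_{k-r}=0$. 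This holds because $U_{k-r}$ spans $\mathrm{null}(A^\top A)=\mathrm{null}(A)$, so $s_i^\top A U_{k-r}=0$.

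I expect the only mildly delicate point to be the pseudoinverse identity when $A$ is rank deficient, which is exactly the regime of interest. The SVD argument handles this directly, without assuming that $A^\top A$ is invertible. Everything else is direct substitution. The conclusion also gives the geometric interpretation: $\mathcal{K}_0=0$ exactly when $s_i\in\mathrm{range}(A)$.
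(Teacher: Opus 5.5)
Your proof is correct and follows essentially the paper's argument. The paper also reduces $\mathcal{K}_0$ to $s_i^\top s_i - s_i^\top A(A^\top A)^\dagger A^\top s_i$ and uses the SVD of $A$ to show that this and $\|(I-AA^\dagger)s_i\|_2^2$ both collapse to the same projector $U\,\mathrm{diag}(I_r,0)\,U^\top$. The only differences are cosmetic: the paper expands the squared norm into four terms and simplifies each one via the SVD, whereas you use the symmetry and idempotence of $I-AA^\dagger$ directly, and you additionally check the $G_1\Lambda^{-1}G_1^\top$ form in Eq.~\ref{eq:28}, which the paper leaves implicit.
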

\begin{corollary}
\label{cy:3}
When matrix \( H \) is a postive semi-definite matrix, the index $\mathcal{K}_1\geq0$.
\end{corollary}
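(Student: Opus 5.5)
The plan is to write $\mathcal{K}_1$ as the minimum of a quadratic form in $\hat{H}$ over an affine set; nonnegativity then follows directly from $H\succeq 0$. The reasoning behind this is that $\mathcal{L}(\theta_i;\tilde{\boldsymbol\theta},\varepsilon)/\theta_i^2=\min_{x}\,[x;1]^\top(\hat F+\varepsilon\hat H)[x;1]$ is a value function in $\varepsilon$. By an envelope argument, its first-order coefficient should be the $\varepsilon$-coefficient $[x;1]^\top\hat H[x;1]$, minimized over the set of zero-order minimizers. The key steps are as follows.

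\emph{Step 1 (zero-order minimizers).} Take $\varepsilon=0$. The minimizers of $[x;1]^\top\hat F[x;1]$ are the solutions of $\hat F_{11}x=-\hat F_{21}^\top$. Since $G_2=0$, the vector $\hat F_{21}^\top$ lies in the range of $\hat F_{11}$. In the eigenbasis $U=[U_r,U_{k-r}]$ of Eq.~(EVD of $\hat F_{11}$), the minimizer set is therefore
\[
X_0=\{x=-U_r\Lambda^{-1}G_1^\top+U_{k-r}y:\ y\in\mathbb{R}^{k-1-r}\},
\]
and on $X_0$ the value is $\mathcal{K}_0$ of Theorem~\ref{thm:2}.

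\emph{Step 2 (compute the $\hat H$-form on $X_0$).} Set $v(y)=[x(y);1]$ with $x(y)\in X_0$. Expanding $q(y)=v(y)^\top\hat H v(y)$ blockwise, using $K_1=\hat H_{21}U_r$, $K_2=\hat H_{21}U_{k-r}$ and $U^\top\hat H_{11}U=\begin{bmatrix}h_{11}&h_{12}\\h_{12}^\top&h_{22}\end{bmatrix}$, gives
\[
q(y)=\hat H_{22}-2K_1\Lambda^{-1}G_1^\top+G_1\Lambda^{-1}h_{11}\Lambda^{-1}G_1^\top+2y^\top\big(K_2^\top-h_{12}^\top\Lambda^{-1}G_1^\top\big)+y^\top h_{22}\,y .
\]

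\emph{Step 3 (minimize over $y$).} Because $H\succeq 0$, the compressed matrix $\hat H$ restricted to these coordinates is PSD. Hence the vector $K_2^\top-h_{12}^\top\Lambda^{-1}G_1^\top$ lies in $\operatorname{range}(h_{22})$. This is the standard range condition for a PSD block matrix: it holds because $q$ is bounded below, being $\ge 0$. Completing the square with $h_{22}^\dagger$ then gives
\[
\min_y q(y)=\hat H_{22}-2K_1\Lambda^{-1}G_1^\top+G_1\Lambda^{-1}h_{11}\Lambda^{-1}G_1^\top-(K_2-G_1\Lambda^{-1}h_{12})h_{22}^\dagger(K_2-G_1\Lambda^{-1}h_{12})^\top .
\]
Expanding the last product yields exactly the terms in Eq.~\ref{eq:29}: $-K_2h_{22}^\dagger K_2^\top$, $-G_1\Lambda^{-1}h_{12}h_{22}^\dagger h_{12}^\top\Lambda^{-1}G_1^\top$, and the two cross terms with $+$ sign. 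Thus $\mathcal{K}_1=\min_y q(y)$.

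\emph{Step 4 (conclude).} Since $\hat H=P_i^\top HP_i\succeq 0$, we have $q(y)\ge 0$ for every $y$. Therefore $\mathcal{K}_1\ge 0$.

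As a sanity check, consider the case $\mathcal{K}_0=0$. There $\mathcal{K}_0+\varepsilon\mathcal{K}_1+O(\varepsilon^2)$ is a Schur complement of the PSD matrix $\hat F+\varepsilon\hat H$, so it is nonnegative. Dividing by $\varepsilon$ and letting $\varepsilon\to0$ gives the same sign conclusion independently.

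The main obstacle is Step 3, specifically checking that the pseudo-inverse formula is legitimate. The identity $\min_y q(y)=c-b^\top h_{22}^\dagger b$ requires $b\in\operatorname{range}(h_{22})$. I would justify this by noting that $\begin{bmatrix}h_{22}&b\\b^\top&c\end{bmatrix}$ is a congruence of $\hat H$ (via the vector $v(y)$ structure) and is therefore PSD. A secondary obstacle is the bookkeeping needed to match every sign and transpose in Eq.~\ref{eq:29}.
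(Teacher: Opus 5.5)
Your proof is correct and is essentially the paper's argument. The paper rotates $\hat H$ by $U$ and takes the generalized Schur complement $S$ of the $h_{22}$ block, which is PSD because $\hat H\succeq 0$. It then shows $\mathcal{K}_1=w^\top S w$ for $w=[-\Lambda^{-1}G_1^\top;\,1]$; this is exactly your $\min_y q(y)$ obtained by completing the square, and it reaches the same rewritten form $\hat H_{22}-2K_1\Lambda^{-1}G_1^\top+G_1\Lambda^{-1}h_{11}\Lambda^{-1}G_1^\top-(K_2-G_1\Lambda^{-1}h_{12})h_{22}^\dagger(K_2-G_1\Lambda^{-1}h_{12})^\top$.

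The range condition you flag as the main obstacle is not needed for the sign. With your $b,c$ one has $q(-h_{22}^\dagger b)=c-b^\top h_{22}^\dagger b$ using only $h_{22}^\dagger h_{22}h_{22}^\dagger=h_{22}^\dagger$, so $\mathcal{K}_1$ is directly a value of the nonnegative form $q$.
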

\begin{corollary}
\label{cy:4}
When matrix \( H \) is a postive semi-definite matrix and $\mathcal{K}_1=0$,  the index $\mathcal{K}_2$ is zero $(\mathcal{K}_2=0)$.
\end{corollary}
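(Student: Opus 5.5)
Corollary~\ref{cy:4} is the last in the list, so that is the statement to prove: if $H\succeq 0$ and $\mathcal{K}_1=0$, then $\mathcal{K}_2=0$. I would treat $\mathcal{K}_0,\mathcal{K}_1,\mathcal{K}_2$ as the Taylor coefficients of the scalar Schur complement
\[
\phi(\varepsilon)=a(\varepsilon)-b(\varepsilon)c(\varepsilon)^{\dagger}b(\varepsilon)^\top ,
\]
which is the profile loss coefficient of $\theta_i^2$. I would not manipulate the explicit formula of Theorem~\ref{thm:2} directly. Since $F+\varepsilon H\succeq 0$ for $\varepsilon\ge 0$, $\phi(\varepsilon)=\min_{\tilde\theta}\mathcal{L}(1,\tilde\theta;\varepsilon)\ge 0$.

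\textbf{Step 1: reduce to the case $\mathcal{K}_0=0$.} Corollary~\ref{cy:3} presupposes the first-order regime, and $\mathcal{K}_1$ is only meaningful when $\mathcal{K}_0=0$. So I assume $\mathcal{K}_0=0$, i.e.\ $s_i\in\operatorname{range}(A)$ by Corollary~\ref{cy:2}. Then there is $w$ with $s_i=Aw$, so $\hat F_{22}=w^\top\hat F_{11}w$ and $\hat F_{21}=w^\top\hat F_{11}$. Writing $G_1=w^\top U_r\Lambda$ gives $G_1\Lambda^{-1}=w_1^\top$, where $w_1=U_r^\top w$ and $w_2=U_{k-r}^\top w$.

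\textbf{Step 2: change variables.} Substitute $\tilde\theta=-w\theta_i+\eta$, which is equivalent to passing to the matrix $E^\top(F+\varepsilon H)E$ with $E$ unimodular. This kills the $F$-coupling exactly: in the new coordinates the $F$-part is $\mathrm{diag}(\hat F_{11},0)$, and the $H$-part becomes $\bar H\succeq0$ with entries $\bar H_{22}=\hat H_{22}-2\hat H_{21}w+w^\top\hat H_{11}w$ and $\bar H_{21}=\hat H_{21}-w^\top\hat H_{11}$. Then
\[
\phi(\varepsilon)=\varepsilon\bar H_{22}-\varepsilon^2\,\bar H_{21}(\hat F_{11}+\varepsilon\hat H_{11})^\dagger\bar H_{21}^\top .
\]

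\textbf{Step 3: identify $\mathcal{K}_1$ and $\mathcal{K}_2$.} Split $\bar H_{21}U=[\bar K_1,\bar K_2]$ and use the block pseudo-inverse from the derivation above, in which the $(2,2)$ block is $S^\dagger\approx\varepsilon^{-1}h_{22}^\dagger+O(1)$. Reading off powers of $\varepsilon$ gives
\[
\mathcal{K}_1=\bar H_{22}-\bar K_2h_{22}^\dagger\bar K_2^\top,
\]
the Schur complement of $h_{22}$ in the PSD matrix $\begin{bmatrix}h_{22}&\bar K_2^\top\\ \bar K_2&\bar H_{22}\end{bmatrix}$, which is a compression of $\bar H$. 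This recovers Corollary~\ref{cy:3} and serves as a check that the substitution $G_1\Lambda^{-1}=w_1^\top$ reproduces Eq.~\ref{eq:29}. The $\varepsilon^2$ coefficient should collapse to
\[
\mathcal{K}_2=-(\bar K_1-\bar K_2h_{22}^\dagger h_{12}^\top)\Lambda^{-1}(\bar K_1-\bar K_2h_{22}^\dagger h_{12}^\top)^\top ,
\]
up to terms multiplied by $\bar K_2(I-h_{22}h_{22}^\dagger)$, which vanish by PSD range inclusion.

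\textbf{Step 4: conclude.} If $\mathcal{K}_1=0$, the PSD compression $\begin{bmatrix}h_{22}&\bar K_2^\top\\ \bar K_2&\bar H_{22}\end{bmatrix}$ has zero Schur complement. Then the vector $(-h_{22}^\dagger\bar K_2^\top,\,1)$ lies in the kernel of this compression, and, using PSD of the full $\bar H$, it lies in the kernel of the full matrix. That is exactly the statement $\bar K_1=\bar K_2h_{22}^\dagger h_{12}^\top$, so $\mathcal{K}_2=0$.

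Here is the kernel step spelled out. A PSD matrix $M$ with $v^\top Mv=0$ satisfies $Mv=0$. Take $v=(U_{k-r}(-h_{22}^\dagger\bar K_2^\top),\,1)$ in the transformed coordinates. Then $v^\top\bar Hv=\mathcal{K}_1=0$, and the $U_r$-block of $\bar Hv$ gives the required identity.

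\textbf{Main obstacle.} The hardest step is Step 3's algebra: verifying that the long expression for $\mathcal{K}_2$ in Theorem~\ref{thm:2} coincides with the negative quadratic form above after substituting $G_1=w_1^\top\Lambda$ and expanding $\bar K_i$. I would first check the $1\times1$ and $2\times2$ scalar cases. Then I would match the terms in groups: the $h_{11}$ terms, the $h_{12}h_{22}^\dagger$ cross terms, and the $h_{22}^\dagger(h_{12}^\top\Lambda^{-1}h_{12})h_{22}^\dagger$ term.
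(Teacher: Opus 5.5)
Your argument is correct, and in substance it is the paper's proof written in sheared coordinates. The paper takes the Schur complement $S$ of $h_{22}$ in the rotated matrix $\tilde H$ and uses $\mathcal{K}_1=y^\top S y$ with $y=[-G_1\Lambda^{-1},\,1]$ from Corollary~\ref{cy:3}. From $\mathcal{K}_1=0$ it gets $Sy=0$, and reads off $X+Y=0$ with $X=K_1-G_1\Lambda^{-1}h_{11}$ and $Y=(G_1\Lambda^{-1}h_{12}-K_2)h_{22}^\dagger h_{12}^\top$. It then regroups $\mathcal{K}_2$ as $-(X\Lambda^{-1}X^\top+Y\Lambda^{-1}Y^\top+X\Lambda^{-1}Y^\top+Y\Lambda^{-1}X^\top)=-(X+Y)\Lambda^{-1}(X+Y)^\top$. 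Take $G_1\Lambda^{-1}=w_1^\top$ and $w=U_rw_1$. Then your $\bar K_1$ is $X$ and your $-\bar K_2h_{22}^\dagger h_{12}^\top$ is $Y$. Your kernel vector $v$ corresponds, under the shear, to the paper's $y$ completed by the minimizing $U_{k-r}$-component $-h_{22}^\dagger\bar K_2^\top$. So Steps~3--4 are exactly the paper's argument.

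\textbf{Step 1 adds a hypothesis.} It is not a reduction: the corollary does not assume $\mathcal{K}_0=0$, so as written you only prove that special case. The restriction is also unnecessary. Since $G_2=\hat F_{21}U_{k-r}=s_i^\top A U_{k-r}=0$ always, you can set $w_1:=\Lambda^{-1}G_1^\top$ with no assumption. The shear then still removes $\hat F_{21}$ and merely leaves $\mathcal{K}_0$ in the corner, so
\[
\phi(\varepsilon)=\mathcal{K}_0+\varepsilon\bar H_{22}-\varepsilon^2\bar H_{21}(\hat F_{11}+\varepsilon\hat H_{11})^\dagger\bar H_{21}^\top .
\]
The $\varepsilon$- and $\varepsilon^2$-coefficients are unchanged, and the formulas of Theorem~\ref{thm:2} never involve $\hat F_{22}$. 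Your justification that ``$\mathcal{K}_1$ is only meaningful when $\mathcal{K}_0=0$'' is also not what Corollary~\ref{cy:3} uses: $\mathcal{K}_1=y^\top Sy\geq 0$ holds identically.

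\textbf{Your main obstacle is short.} With $w_2=0$, the identity $\mathcal{K}_2=-(\bar K_1-\bar K_2h_{22}^\dagger h_{12}^\top)\Lambda^{-1}(\bar K_1-\bar K_2h_{22}^\dagger h_{12}^\top)^\top$ holds exactly, with no leftover terms, via the four-term grouping above. The $h_{22}^\dagger h_{12}^\top\Lambda^{-1}h_{12}h_{22}^\dagger$ term is $Y\Lambda^{-1}Y^\top$ because $h_{22}^\dagger$ is symmetric. If you keep $w_2\neq 0$, the only extra piece is $w_2^\top(h_{22}h_{22}^\dagger-I)h_{12}^\top$. It vanishes because $\mathrm{range}(h_{12}^\top)\subseteq\mathrm{range}(h_{22})$ by positive semidefiniteness of $\hat H_{11}$.
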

\noindent Details of these proofs are shown in the corresponding section “Proof of Corollary \ref{cy:2}”, “Proof of Corollary \ref{cy:3}”, and “Proof of Corollary \ref{cy:4}” in Supplementary Materials, respectively.

\subsection*{UQ for Practical Non-identifiable Parameters}
Our previous work established an UQ methodology grounded in practical identifiability to evaluate how variations in non-identifiable parameters influence model uncertainty \cite{wang2025systematic}. For those aligned with the non-identifiable eigenvectors $U_{k-r}^T$, we perform a perturbation vector as $\varepsilon_{k-r}\sim N(\boldsymbol{0},\Sigma_{k-r})$  ($U_{k-r}^T \Delta \boldsymbol{\theta} = \varepsilon_{k-r}$). The model parameters are adjusted by: 
\begin{equation}
\Delta \boldsymbol{\theta} = \hat{\boldsymbol{\theta}} -\boldsymbol{\theta^*} =  U_{k-r} \varepsilon_{k-r} 
\end{equation}
The observable variable $\boldsymbol{h}(\boldsymbol{\varphi}(t,\boldsymbol{\theta^*})$ is linearly approached as
\begin{equation}
\boldsymbol{h}(t,\boldsymbol{\hat{\theta}})=\boldsymbol{h}(t,\boldsymbol{\theta^*})+\nabla_{\boldsymbol{\theta}}\boldsymbol{h}(t,\boldsymbol{\theta^*})\Delta \boldsymbol{\theta}
\end{equation}
Based on law of propagation of uncertainty, the estimation of uncertainty on the observable variable such as $h_l (\boldsymbol{\varphi} (t, \hat{\boldsymbol{\theta}}))(l = 1, 2, \dots, L),\forall t > 0$ is written as:
\begin{equation}
\begin{aligned}
\text{Var}(h_l (t, \boldsymbol{{\theta^*}})) = \nabla_\theta h_l (t, {\boldsymbol{\theta^*}}) \, \text{CoV}({\boldsymbol{\theta^*}}) \, (\nabla_\theta h_l (t, {\boldsymbol{\theta^*}}))^\top 
\end{aligned}
\end{equation}
where the variance of parameter $\hat{\boldsymbol{\theta}}$ is obtain as
$\text{CoV}(\hat{\boldsymbol{\theta}}) = U_{k-r} \Sigma_{k-r} U_{k-r}^T$. Through the linear approximation, the variance of the state variable is calculated using the error propagation formula, which can then be used to construct the confidence interval for the state variable. Assuming each component of observable variable $h_l (t, \hat{\boldsymbol{\theta}}) (l = 1, 2, \dots, L)$ approximately follows a normal distribution, its $100(1-\alpha)\%$ confidence interval follows:
\begin{equation}
\begin{aligned}
h_l (t, \hat{\boldsymbol{\theta}}) \in  \left[h_l (t, \boldsymbol{\theta^*}) - z_{\alpha/2} \sqrt{\text{Var}(h_l (t, \boldsymbol{\theta^*}))}, \right. 
 \left. h_l (t, \boldsymbol{\theta^*}) + z_{\alpha/2} \sqrt{\text{Var}(h_l (t, \boldsymbol{\theta^*}))} \right]  
\end{aligned}
\end{equation}
where $z_{\alpha/2}$ is the critical value of the standard normal distribution. Now, we extend the uncertainty quantification to first-order $\varepsilon$ non-identifiable parameter in Theorem \ref{thm:3} as follows:
\begin{theorem}
\label{thm:3}
Define the uncertainty for the zero-order and first-order $\varepsilon$ as $Var_0(h_l)$ and $Var_1(h_l)$, and assume the perturbation $\Sigma = \sigma^2 I$. Following the UQ method, we have $Var_0( h_l)\geq Var_1(h_l)$. 
\end{theorem}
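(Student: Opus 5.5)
The comparison reduces to a statement about orthogonal projectors, and the plan is to show that the first-order non-identifiable subspace sits inside the zero-order one.

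\textbf{Zero-order variance.} The zero-order UQ perturbs along $U_{k-r_0}^{(0)}$, which has orthonormal columns, with $\Sigma=\sigma^2 I$. Hence
\[
\mathrm{CoV}_0=\sigma^2 U_{k-r_0}^{(0)}U_{k-r_0}^{(0)\top}=\sigma^2 P_0,
\]
where $P_0$ is the orthogonal projector onto $\mathrm{null}(F)$.

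\textbf{First-order variance.} By Theorem~\ref{thm:1}(2), the first-order non-identifiable coordinates are $U_{k-r_0-r_1}^{(1)\top}U_{k-r_0}^{(0)\top}\boldsymbol{\theta}$. The perturbation direction is therefore $W_1:=U_{k-r_0}^{(0)}U_{k-r_0-r_1}^{(1)}$, so that $\Delta\boldsymbol\theta=W_1\varepsilon_{k-r_0-r_1}$. Here $U^{(1)}$ is the eigenvector matrix from the EVD of the Schur-complement block restricted to $\mathrm{null}(F)$, a symmetric $(k-r_0)\times(k-r_0)$ matrix, so $U^{(1)}_{k-r_0-r_1}$ has orthonormal columns. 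Then
\[
W_1^\top W_1=U_{k-r_0-r_1}^{(1)\top}U_{k-r_0}^{(0)\top}U_{k-r_0}^{(0)}U_{k-r_0-r_1}^{(1)}=I,
\]
so $W_1$ also has orthonormal columns and $\mathrm{CoV}_1=\sigma^2 P_1$ with $P_1=W_1W_1^\top$.

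\textbf{Nesting.} The range of $W_1$ lies inside the range of $U^{(0)}_{k-r_0}$. Two facts follow:
\[
P_0W_1=W_1 \quad\text{and}\quad P_0P_1=P_1P_0=P_1 .
\]
Consequently $P_0-P_1$ is itself an orthogonal projector: it is symmetric and idempotent, since $(P_0-P_1)^2=P_0-2P_1+P_1=P_0-P_1$. It is therefore positive semidefinite.

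\textbf{Conclusion.} Write $g=\nabla_\theta h_l(t,\boldsymbol\theta^*)$ as a row vector. The propagation formula gives
\[
Var_0(h_l)-Var_1(h_l)=\sigma^2\, g(P_0-P_1)g^\top=\sigma^2\|(P_0-P_1)g^\top\|_2^2\ge 0
\]
for every $t$ and every $l$. This yields $Var_0\ge Var_1$ pointwise in time, which is also what makes the confidence bands nested.

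\textbf{Main obstacle.} The delicate step is justifying that $U^{(1)}$ is orthogonal, so that $W_1$ has orthonormal columns. This depends on how the proof of Theorem~\ref{thm:1} constructs the first-order block. I expect it to be the EVD of the symmetric matrix $U_{k-r_0}^{(0)\top}HU_{k-r_0}^{(0)}$, possibly corrected by a Schur-complement term, in which case orthonormality is automatic. Even if only a basis without orthonormality were available, I would fall back on two observations:
\begin{itemize}
\item $\mathrm{CoV}_1=\sigma^2W_1W_1^\top$ still has range inside $\mathrm{range}(P_0)$.
\item One can compare $W_1W_1^\top\preceq P_0$ after normalizing the columns of $W_1$.
\end{itemize}
The intended statement clearly assumes orthonormal eigenvectors, however, and I would state that assumption explicitly.
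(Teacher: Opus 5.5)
Your proof is correct and follows the paper's route: both rest on the nesting $\mathrm{range}(U^{(0)}_{k-r_0}U^{(1)}_{k-r_0-r_1})\subseteq\mathrm{range}(U^{(0)}_{k-r_0})$. The paper finishes via contractivity of $P_{\mathcal{S}_1}$ applied to $P_{\mathcal{S}_0}(\nabla_\theta h_l)^\top$, whereas you show that $P_0-P_1$ is itself an orthogonal projector. Your check that $W_1^\top W_1=I$ is automatic, since $U^{(1)}$ comes from the EVD of the symmetric matrix $F_1=U^{(0)\top}_{k-r_0}HU^{(0)}_{k-r_0}$; it makes explicit a step the paper uses without comment. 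The fallback in your last paragraph is unnecessary and also false as stated: unit-norm but non-orthogonal columns do not give $W_1W_1^\top\preceq P_0$ (two equal unit columns already produce an eigenvalue $2$), so simply drop it.
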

\noindent Then, we compute the uncertainty of $\varepsilon$-order for the loss function $l\big(\boldsymbol{h}; \hat{\boldsymbol{h}} \big)$.  The oss function $l\big(\boldsymbol{h}; \hat{\boldsymbol{h}} \big)$ under the perturbation of data is rewritten as
\begin{equation}
l\big(\boldsymbol{h}(\boldsymbol{\theta_\delta}); \hat{\boldsymbol{h}} - \boldsymbol{\delta} \big)  \approx \frac{m\varepsilon^2}{2} +\varepsilon(\sum_{l=1}^{L}\nabla_\theta h_l(\boldsymbol{\theta^*}))\top \Delta \boldsymbol{\theta} + \frac{1}{2} \Delta \boldsymbol{\theta} ^\top (F+\varepsilon H) \Delta\boldsymbol{\theta} 
\end{equation}
and the matrix $F+\varepsilon H$ can be expanded as 
\begin{equation}
F+\varepsilon H \approx \hat{U} \begin{bmatrix}
\Sigma_{r_0} + B_0 & ~ & ~ \\
~ & \varepsilon(\Sigma_{r_1} + B_1) & ~ \\
~ & ~ & \varepsilon^2 S_2
\end{bmatrix}
\hat{U}^\top
\end{equation}
where the matrices $B_0$,$B_1$, and $S_2$ are shown as follows:
\begin{equation*}
B_0 = \varepsilon U^{(0)\top}_{r_0} HU^{(0)}_{r_0}, B_1 =-\varepsilon U^{(1)\top}_{r_1} H_1U^{(1)}_{r_1}, S_2 = - U^{(1)\top}_{k-\sum_{i=0}^{1}r_i} H_1 U^{(1)}_{k-\sum_{i=0}^{1}r_i};   
\end{equation*}
\begin{equation*}
H_1 = U_{k-r_0}^{(0)\top} H_0 U_{r_0}^{(0)}\Sigma_{r_0}^{-1}U_{r_0}^{(0)\top} H_0 U_{k-r_0}^{(0)},~Z_1 = \varepsilon U_{k-r_0}^{(0)\top} H_0 U_{r_0}^{(0)}(\Sigma_{r_0}+\varepsilon U_{r_0}^{(0)\top} H_0 U_{r_0}^{(0)})^{-1};
\end{equation*}
\begin{equation}
\hat{U} = [U^{(0)}_{r_0} +  U^{(0)}_{k-r_0}Z_1, U^{(0)}_{k-r_0}U^{(1)}_{r_1}, U^{(0)}_{k-r_0}U^{(1)}_{k-\sum_{i=0}^{1}r_i}]
. 
\end{equation}
\noindent Then, we decompose the perturbation $\Delta \boldsymbol{\theta}$ using the projection matrix $\hat{U}$ as
\begin{equation}
\Delta \boldsymbol{\theta} = \hat{U} \begin{bmatrix}
    \bf{0} \\ 
    \eta_{r_1} \\
    \eta_{k-\sum_{i=0}^1 r_i}
\end{bmatrix}
= U_{k-r_0}^{(0)}U_{r_1}^{(1)}\eta_{r_1}+U_{k-r_0}^{(0)}U_{k-\sum_{0}^{1}r_i}^{(1)} \eta_{k-\sum_{0}^{1}r_i} 
\end{equation}
where $\eta_{r_1}$ corresponds to the perturbations that are unidentifiable in the zeroth order but are actually identifiable in the first order  and $\eta_{r_1}$ is the first-order perturbation which direction is still unidentifiable. The uncertainty of L-2 loss is dominated by the quadratic term denoted as
\begin{equation}
l_{unc} = \frac{1}{2} \Delta\boldsymbol{\theta}^\top (F+\varepsilon H)\Delta \boldsymbol{\theta}.
\end{equation}
Substituting the $\eta_{r_1}$, the uncertainty of zero-order $\varepsilon$ ($l_{unc}^{(0)}$) is computed as follows:
\begin{align}
\label{eq:41}
l_{unc}^{(0)} &= \frac{1}{2} (U_{k-r_0}^{(0)}U_{r_1}^{(1)}\eta_{r_1})^\top (F+\varepsilon H)(U_{k-r_0}^{(0)}U_{r_1}^{(1)}\eta_{r_1}) \nonumber \\
& = \frac{1}{2} \varepsilon \cdot \eta_{r_1}^\top(\Sigma_{r_1}+B_1)\eta_{r_1}.
\end{align}
Substituting the $\eta_{k-\sum_{i=0}^{1}r_i}$, the uncertainty of first-order $\varepsilon$ ($l_{unc}^{(1)}$) is computed as:
\begin{align}
\label{eq:42}
l_{unc}^{(1)} &= \frac{1}{2} (U_{k-r_0}^{(0)}U_{k-\sum_{0}^{1}r_i}^{(1)} \eta_{k-\sum_{0}^{1}r_i})^\top (F+\varepsilon H)(U_{k-r_0}^{(0)}U_{k-\sum_{0}^{1}r_i}^{(1)} \eta_{k-\sum_{0}^{1}r_i}) \nonumber \\
& = \frac{1}{2} \varepsilon^2 \cdot \eta_{k-\sum_{i=0}^{i=1}r_i}^\top S_2 \eta_{k-\sum_{i=0}^{i=1}r_i}
\end{align}
Assuming the homogeneous Gaussian distribution $\eta_{k-r_0}\sim \mathcal{N}(0,\sigma^2I_{k-r_0})$, the expected loss of zero-order is computed using the formula $\mathbb{E}(\eta^\top M\eta) = \text{Tr}(M\text{CoV}(\eta))$ as follows,
\begin{equation}
\mathbb{E}(l_{unc}^{(0)}) = \frac{1}{2}\sigma^2 \varepsilon \cdot \text{Tr} (\Sigma_{r_1}+B_1),
\end{equation}
and variance of the loss
\begin{equation}
\text{Var}(l_{unc}^{(0)}) = \frac{1}{2} \sigma^4 \varepsilon^2 \cdot \text{Tr} ((\Sigma_{r_1}+B_1)^2).
\end{equation}
The expected loss of first-order $\varepsilon$ is computed as
\begin{equation}
\mathbb{E}(l_{unc}^{(1)}) = \frac{1}{2}\sigma^2 \varepsilon^2 \cdot \text{Tr} (S_2),
\end{equation}
and variance of the loss
\begin{equation}
\text{Var}(l_{unc}^{(1)}) = \frac{1}{2} \sigma^4 \varepsilon^4 \cdot \text{Tr} (S_2^2).
\end{equation}

\begin{algorithm}[ht]
\caption{Higher-Order Practical Identifiability and Uncertainty Quantification Analysis}
\label{am:1}
\begin{algorithmic}[1]
\renewcommand{\algorithmicrequire}{\textbf{Input:}}
\renewcommand{\algorithmicensure}{\textbf{Output:}}
\Require model $\boldsymbol{\varphi}(t, \boldsymbol{\theta})$, observable variable $\boldsymbol{h}(\cdot)$, empirical data $\{(t_i, \hat{\boldsymbol{h}}_i)\}_{i=1}^N$, perturbation amplitude $\sigma$
\Ensure  eigen-coordinate $\hat{U}$, metrics $\mathcal{K}_0$ and $\mathcal{K}_1$, variances $\text{Var}_0$ and $\text{Var}_1$

\State Compute the optimal parameter $\boldsymbol{\theta}^*$ via least-squares estimation (Eq.~\ref{eq:2})
\State Compute the FIM $F$ and perturbation Hessian matrix $H$ by Eq.~\ref{eq:8}

\State \Comment{\textit{Higher-Order Practical Identifiability Analysis}}
\State Decompose matrix $F$ as 
\[
F = \begin{bmatrix} U_{r_0}^{(0)} & U_{k-r_0}^{(0)} \end{bmatrix}
\begin{bmatrix} \Sigma_{r_0} & 0 \\ 0 & 0 \end{bmatrix}
\begin{bmatrix} U_{r_0}^{(0)} & U_{k-r_0}^{(0)} \end{bmatrix}^\top.
\]
 Zero-order identifiable eigen-coordinate $U_{r_0}^{(0)\top}\boldsymbol{\theta}$, non-identifiable eigen-coordinate $U_{k-r_0}^{(0)\top}\boldsymbol{\theta}$
\State Decompose matrix $F_1= U_{k-r_0}^{(0)\top}HU_{k-r_0}^{(0)}$ as
\[
F_1 = \begin{bmatrix} U_{r_1}^{(1)} & U_{k-r_0-r_1}^{(1)} \end{bmatrix}
\begin{bmatrix} \Sigma_{r_1} & 0 \\ 0 & 0 \end{bmatrix}
\begin{bmatrix} U_{r_1}^{(1)} & U_{k-r_0-r_1}^{(1)} \end{bmatrix}^\top.
\]
First-order identifiable eigen-coordinate $U_{r_1}^{(1)\top}U_{k-r_0}^{(0)\top}\boldsymbol{\theta}$, non-identifiable eigen-coordinate $U_{k-r_1-r_0}^{(1)\top}U_{k-r_0}^{(0)\top}\boldsymbol{\theta}$
\State \Comment{\textit{Hierarchical Coordinate Identifiability Analysis}}
\For{$i = 1$ \textbf{to} $K$}
    \State Compute metric $\mathcal{K}_0$ by Eq.~\ref{eq:28} and $\mathcal{K}_1$ by Eq.~\ref{eq:29}
\EndFor
\State \Comment{\textit{Higher-Order Uncertainty Quantification (UQ) Analysis}}
\For{$l = 1$ \textbf{to} $L$}
    \State Compute the zero-order uncertainty:
    \[
    \text{Var}_0(h_l (t, {\boldsymbol{\theta^*}})) = \sigma^2 \nabla_{\boldsymbol{\theta}} h_l (t, {\boldsymbol{\theta^*}}) \left( U_{k-r_0}^{(0)} U_{k-r_0}^{(0)\top} \right) (\nabla_{\boldsymbol{\theta}} h_l (t, {\boldsymbol{\theta^*}}))^\top
    \]
    \State Compute the first-order uncertainty:
    \[
    \text{Var}_1(h_l (t, {\boldsymbol{\theta^*}})) = \sigma^2 \nabla_{\boldsymbol{\theta}} h_l (t, {\boldsymbol{\theta^*}}) \left( U_{k-r_0}^{(0)} U_{k-\sum r_i}^{(1)} \right) \left( U_{k-r_0}^{(0)} U_{k-\sum r_i}^{(1)} \right)^\top (\nabla_{\boldsymbol{\theta}} h_l (t, {\boldsymbol{\theta^*}}))^\top
    \]
\EndFor

\State \Return $\hat{U}, \mathcal{K}_0, \mathcal{K}_1, \text{Var}_0, \text{Var}_1$
\end{algorithmic}
\end{algorithm}


\clearpage 

%
\bibliography{science_template} 
\bibliographystyle{sciencemag}


\section*{Acknowledgments}
\paragraph*{Funding:}
This research supported by National Institute of General Medical Sciences through grant 1R35GM146894 (S.W. and W.H. ).
\paragraph*{Author contributions:}
W.H. designed research; S.W. performed research; W.S. and W.H. contributed new reagents/analytic tools; S.W. analyzed data; and S.W. and W.H. wrote the paper.
\paragraph*{Competing interests:}
There are no competing interests to declare.
\paragraph*{Data and materials availability:} \textbf{HIV Concentration Dataset:} The data used in this study were obtained from the publicly available dataset in \cite{stafford2000modeling}. Each data point consists of two values: the first indicates the relative time in days, and the second represents the viral concentration in thousands of HIV-1 RNA copies per milliliter ($\text{ml}^{-1}$).

\noindent \textbf{ADNI Dataset:} Data used in the preparation of this article were obtained from the ADNI database (\url{http://adni.loni.usc.edu/}). ADNI was launched in 2003 as a public-private partnership, led by Principal Investigator Michael W. Weiner, MD. Its primary goal is to determine whether serial magnetic resonance imaging (MRI), positron emission tomography (PET), other biological markers, and clinical and neuropsychological assessments can be combined to track the progression of mild cognitive impairment (MCI) and early Alzheimer's disease (AD).

Based on our group preivous work \cite{wang2025learning} on the data preprocessing, we accessed multimodal neuroimaging and biomarker data from the ADNI database under an approved data use application. $A\beta$-PET standardized uptake value ratios (SUVRs) was obtained from the datasets titled \textit{“UC Berkeley – Amyloid PET 6mm Res analysis [ADNI1, GO, 2, 3, 4]”}. Regional SUVRs were computed by dividing the standardized uptake values (SUVs) in target regions by the SUV of the whole cerebellum, chosen as the reference region due to its minimal nonspecific binding and stable tracer uptake across subjects. 

\noindent \textbf{Software Availability:} Source codes and data have been deposited on the GitHub repository (https://github.com/WilliamMoriaty/Scaling-Law-Practical-Identifiability).


\subsection*{Supplementary materials}
Proofs of Theorem \ref{thm:1}-\ref{thm:3} and Corollary \ref{cy:1}-\ref{cy:4}\\
Supplementary Text\\
Figures S1\\
Tables S1 to S8


\newpage


\renewcommand{\thefigure}{S\arabic{figure}}
\renewcommand{\thetable}{S\arabic{table}}
\renewcommand{\theequation}{S\arabic{equation}}
\renewcommand{\thepage}{S\arabic{page}}
\setcounter{figure}{0}
\setcounter{table}{0}
\setcounter{equation}{0}
\setcounter{page}{1} 


\begin{center}
\section*{Supplementary Materials for\\ \scititle}

Shun~Wang,
Wenrui~Hao$^*$\\ 
 Email: wxh64@psu.edu\\
\end{center}

\subsubsection*{This PDF file includes:}
Proofs of Theorem \ref{thm:1}-\ref{thm:3} and Corollary \ref{cy:1}-\ref{cy:4}\\
Supplementary Text\\
Figures S1\\
Tables S1 to S8

\newpage


\subsection*{Proofs of Theorems and Corollaries}

\subsubsection*{Proof of Theorem \ref{thm:1}}
\begin{proof}
Perform the eigenvalue decomposition to the symmetric matrix \( F \) :
\begin{equation}
F = \begin{bmatrix} U_r & U_{k-r} \end{bmatrix}
\begin{bmatrix} \Sigma_r & 0 \\ 0 & 0 \end{bmatrix}
\begin{bmatrix} U_r & U_{k-r} \end{bmatrix}^\top.
\end{equation}

Now consider:
\begin{equation}
F + \varepsilon H =
\begin{bmatrix} U_r & U_{k-r} \end{bmatrix}
\begin{bmatrix}
\Sigma_r + \varepsilon U_r^\top H U_r & \varepsilon U_r^\top H U_{k-r} \\
\varepsilon U_{k-r}^\top H U_r & \varepsilon U_{k-r}^\top H U_{k-r}
\end{bmatrix}
\begin{bmatrix} U_r & U_{k-r} \end{bmatrix}^\top.
\end{equation}

Let us define:
\begin{equation}
\label{eq:S3}
Q :=
\begin{bmatrix}
\Sigma_r + \varepsilon U_r^\top H U_r & \varepsilon U_r^\top H U_{k-r} \\
\varepsilon U_{k-r}^\top H U_r & \varepsilon U_{k-r}^\top H U_{k-r}
\end{bmatrix}.
\end{equation}

Following Definition \ref{def:2}, the Schur complement matrix of $(\Sigma_{r_0} + \varepsilon U_{r_0}^{(0)\top} H U_{r_0}^{(0)})$ is obtained as:
\begin{equation}
S_1 = \varepsilon\Phi_1 =\varepsilon \big(U_{k-r_0}^{(0)\top} H_0 U_{k-r_0}^{(0)}
- \varepsilon U_{k-r_0}^{(0)\top} H_0 U_{r_0}^{(0)} (\Sigma_{r_0} + \varepsilon U_{r_0}^{(0)\top} H_0 U_{r_0}^{(0)})^{-1} U_{r_0}^{(0)\top} H_0 U_{k-r_0}^{(0)}\big).
\end{equation}
where $H_0 = H$, $U_{r_0}^{(0)} = U_{r}$, and $U_{k-r_0}^{(0)} = U_{k-r}$. The zero subspace of $Q$ is determined by the zero eigenvectors of the following matrix $\Phi_1$
\begin{equation}
\Phi_1 = U_{k-r_0}^{(0)\top} H_0 U_{k-r_0}^{(0)}
- \varepsilon U_{k-r_0}^{(0)\top} H_0 U_{r_0}^{(0)} (\Sigma_{r_0} + \varepsilon U_{r_0}^{(0)\top} H_0 U_{r_0}^{(0)})^{-1} U_{r_0}^{(0)\top} H_0 U_{k-r_0}^{(0)}.
\end{equation}
By the $LDL^\top$ decomposition, we can rewrite the matrix $Q_1$ as:
\begin{equation}
Q_1 = L_1
\begin{bmatrix}
\Sigma_{r_0}+\varepsilon U_{r_0}^{(0)\top} H_0 U_{r_0}^{(0)} & 0 \\
0 & S_1
\end{bmatrix} L_1^\top,
L_1= \begin{bmatrix}
I_{r_0} & 0 \\
Z_1 & I_{k-r_0}
\end{bmatrix},
\end{equation}
\begin{equation}
Z_1 = \varepsilon U_{k-r_0}^{(0)\top} H_0 U_{r_0}^{(0)}(\Sigma_{r_0}+\varepsilon U_{r_0}^{(0)\top} H_0 U_{r_0}^{(0)})^{-1},~B_0 = \varepsilon U^{(0)\top}_{r_0} H_0U^{(0)}_{r_0},
\end{equation}
\begin{equation}
F+\varepsilon H = U^{(0)}L_1\begin{bmatrix}
\Sigma_r+\varepsilon U_r^{(0)\top} H_0 U_r^{(0)} & 0 \\
0 & S_1
\end{bmatrix} (U^{(0)}L_1)^\top,
\end{equation}
We expand the inverse term using the Neumann series:
\begin{equation}
(\Sigma_{r_0} + \varepsilon U_{r_0}^{(0)\top} H_0 U_{r_0})^{-1}
= \left(I + \varepsilon \Sigma_{r_0}^{-1} U_{r_0}^{(0)\top} H_0 U_{r_0}^{(0)} \right)^{-1}\Sigma_{r_0}^{-1} ,
\end{equation}
which is valid when the spectral radius $\rho(\varepsilon \Sigma_{r_0}^{-1} U_{r_0}^{(0)\top} H_0 U_{r_0}^{(0)}) < 1$. Using Neumann series:
\begin{equation}
(\Sigma_{r_0} + \varepsilon U_{r_0}^{(0)\top} H_0 U_{r_0}^{(0)})^{-1}
= \sum_{n=0}^{\infty} (-1)^n (\varepsilon \Sigma_{r_0}^{-1} U_{r_0}^{(0)\top} H_0 U_{r_0}^{(0)})^n \Sigma_{r_0}^{-1} .
\end{equation}
Thus, the matrix $\Phi_1$ becomes:
\begin{equation}
\Phi_1 =
U_{k-r_0}^{(0)\top} H_0 U_{k-r_0}^{(0)}
+  U_{k-r_0}^{(0)\top} H_0 U_{r_0}^{(0)}
\left[
\sum_{n=0}^{\infty} (-\varepsilon)^{n+1} ( \Sigma_{r_0}^{-1} U_{r_0}^{(0)\top} H_0 U_{r_0}^{(0)})^n
\right]\Sigma_{r_0}^{-1}
U_{r_0}^{(0)\top} H_0 U_{k-r_0}^{(0)}.
\end{equation}
Expanding to the first-order term to $\Phi_1$, that is the second-order term to $F+\varepsilon H$, we approximate the matrix $\Phi_1$ as
\begin{equation}
\Phi_1 \approx U_{k-r_0}^{(0)\top} H_0 U_{k-r_0}^{(0)}-\varepsilon U_{k-r_0}^{(0)\top} H_0 U_{r_0}^{(0)}\Sigma_{r_0}^{-1}U_{r_0}^{(0)\top} H_0 U_{k-r_0}^{(0)}
\end{equation}
Denoted as 
\begin{align}
F_1 &=U_{k-r_0}^{(0)\top} H_0 U_{k-r_0}^{(0)} \nonumber, \\ 
H_1 &= U_{k-r_0}^{(0)\top} H_0 U_{r_0}^{(0)}\Sigma_{r_0}^{-1}U_{r_0}^{(0)\top} H_0 U_{k-r_0}^{(0)},
\end{align}
we rewrite the term $\Phi_1 \approx F_1-\varepsilon H_1$. Then, we perform an eigenvalue decomposition on
\begin{equation}
F_1 = 
\begin{bmatrix}
U_{r_1}^{(1)} & U_{k-\sum_{i=0}^1 r_i}^{(1)}
\end{bmatrix}
\begin{bmatrix}
{\Sigma}_{r_1}& 0 \\
0 & 0
\end{bmatrix}
\begin{bmatrix}
U_{r_1}^{(1)\top} \\
U_{k-\sum_{i=0}^1 r_i}^{(1)\top}
\end{bmatrix},
\end{equation}
and $\Phi_1$ can also be rewritten as 
\begin{align}
\Phi_1 &\approx \begin{bmatrix}
U_{r_1}^{(1)} & U_{k-\sum_{i=0}^1 r_i}^{(1)}
\end{bmatrix}
Q_2
\begin{bmatrix}
U_{r_1}^{(1)\top} \\
U_{k-\sum_{i=0}^1 r_i}^{(1)\top}
\end{bmatrix}
, \nonumber \\
Q_2 &=
\begin{bmatrix}
{\Sigma}_{r_1} - \varepsilon U_{r_1}^{(1)\top}{H}_1U_{r_1}^{(1)}  & -\varepsilon U_{r_1}^{(1)\top}{H}_1U_{k-\sum_{i=0}^1 r_i}^{(1)}  \\
-\varepsilon U_{k-\sum_{i=0}^1 r_i}^{(1)\top} {H}_1U_{r_1}^{(1)}   & -\varepsilon U_{k-\sum_{i=0}^1 r_i}^{(1)\top}{H}_1U_{k-\sum_{i=0}^1 r_i}^{(1)\top}
\end{bmatrix} 
\end{align}
The Schu r complement of ${\Sigma}_{r_1} - \varepsilon U_{r_1}^{(1)\top}{H}_1U_{r_1}^{(1)}$ is written as
\begin{equation}
S_2 = -\varepsilon\Phi_2 = -\varepsilon \big(U_{k-\sum_{i=0}^1 r_i}^{(1)\top}{H}_1U_{k-\sum_{i=0}^1 r_i}^{(1)\top} + \varepsilon U_{k-\sum_{i=0}^1 r_i}^{(1)\top} {H}_1U_{r_1}^{(1)}({\Sigma}_{r_1} - \varepsilon U_{r_1}^{(1)\top}{H}_1U_{r_1}^{(1)})^{-1}U_{r_1}^{(1)\top}{H}_1U_{k-\sum_{i=0}^1 r_i}^{(1)} \big)
\end{equation}
By the $LDL^\top$ decomposition, we can rewrite the matrix $Q_2$ as:
\begin{equation}
Q_2 = L_2
\begin{bmatrix}
\Sigma_{r_1}-\varepsilon U_{r_1}^{(1)\top} H_1 U_{r_1}^{(1)} & 0 \\
0 & S_2
\end{bmatrix} L_2^\top,
L_2= \begin{bmatrix}
I_{r_1} & 0 \\
Z_2 & I_{k-\sum_{i=0}^1 r_i}
\end{bmatrix}
.
\end{equation}
\begin{equation}
Z_2 = -\varepsilon U_{k-\sum_{i=0}^1 r_i}^{(1)\top} H_1 U_{r_1}^{(1)}(\Sigma_{r_1}-\varepsilon U_{r_1}^{(1)\top} H_1 U_{r_1}^{(1)})^{-1},~B_1 =-\varepsilon U^{(1)\top}_{r_1} H_1U^{(1)}_{r_1},
\end{equation}
\begin{equation}
F_1-\varepsilon H_1 = U^{(1)}L_2\begin{bmatrix}
\Sigma_{r_1}-\varepsilon U_{r_1}^{(1)\top} H_1 U_{r_1}^{(1)} & 0 \\
0 & S_2
\end{bmatrix} (U^{(1)}L_2)^\top
\end{equation}
And we use the linear approximation to the $\Phi_2$ as
\begin{equation}
\Phi_2 \approx U_{k-\sum_{i=0}^1 r_i}^{(1)\top}{H}_1U_{k-\sum_{i=0}^1 r_i}^{(1)\top} + \varepsilon U_{k-\sum_{i=0}^1 r_i}^{(1)\top} {H}_1U_{r_1}^{(1)}{\Sigma}_{r_1}^{-1}U_{r_1}^{(1)\top}{H}_1U_{k-\sum_{i=0}^1 r_i}^{(1)}
\end{equation}
Denoted as 
\begin{align}
F_2 &= U_{k-\sum_{i=0}^1 r_i}^{(1)\top}{H}_1U_{k-\sum_{i=0}^1 r_i}^{(1)},
\nonumber \\
H_2 &= U_{k-\sum_{i=0}^1 r_i}^{(1)\top} {H}_1U_{r_1}^{(1)}{\Sigma}_{r_1}^{-1}U_{r_1}^{(1)\top}{H}_1U_{k-\sum_{i=0}^1 r_i}^{(1)},
\end{align}
we rewrite the $\Phi_2 \approx F_2 + \varepsilon H_2$. Then, we continue to perform the eigenvalue decomposition to $F_2$ as
\begin{equation}
F_2 = 
\begin{bmatrix}
U_{r_2}^{(2)} & U_{k-\sum_{i=0}^2 r_i}^{(2)}
\end{bmatrix}
\begin{bmatrix}
{\Sigma}_{r_2}& 0 \\
0 & 0
\end{bmatrix}
\begin{bmatrix}
U_{r_2}^{(2)\top} \\
U_{k-\sum_{i=0}^2 r_i}^{(2)\top}
\end{bmatrix},
\end{equation}
and $\Phi_2$ can also be rewritten as
\begin{align}
\Phi_2 &\approx \begin{bmatrix}
U_{r_2}^{(2)} & U_{k-\sum_{i=0}^2 r_i}^{(2)}
\end{bmatrix}
Q_3
\begin{bmatrix}
U_{r_2}^{(2)\top} \\
U_{k-\sum_{i=0}^2 r_i}^{(2)\top}
\end{bmatrix}
, \nonumber \\
Q_3 &=
\begin{bmatrix}
{\Sigma}_{r_2} + \varepsilon U_{r_2}^{(2)\top}{H}_2U_{r_2}^{(2)}  & \varepsilon U_{r_2}^{(2)\top}{H}_2U_{k-\sum_{i=0}^2 r_i}^{(2)}  \\
\varepsilon U_{k-\sum_{i=0}^2 r_i}^{(2)\top} {H}_2U_{r_2}^{(2)}   & \varepsilon U_{k-\sum_{i=0}^2 r_i}^{(2)\top}{H}_2U_{k-\sum_{i=0}^2 r_i}^{(2)\top}
\end{bmatrix} 
\end{align}
The Schur complement of ${\Sigma}_{r_2} + \varepsilon U_{r_2}^{(2)\top}{H}_2U_{r_2}^{(2)}$ is written as
\begin{equation}
S_3 = \varepsilon\Phi_3 = \varepsilon \big(U_{k-\sum_{i=0}^2 r_i}^{(2)\top}{H}_2U_{k-\sum_{i=0}^2 r_i}^{(2)\top} + \varepsilon U_{k-\sum_{i=0}^2 r_i}^{(2)\top} {H}_2U_{r_2}^{(2)}({\Sigma}_{r_2} + \varepsilon U_{r_2}^{(2)\top}{H}_2U_{r_2}^{(2)})^{-1}U_{r_2}^{(2)\top}{H}_2U_{k-\sum_{i=0}^2 r_i}^{(2)} \big)
\end{equation}
By the $LDL^\top$ decomposition, we can rewrite the matrix $Q_3$ as:
\begin{equation}
Q_3 = L_3
\begin{bmatrix}
\Sigma_{r_2}+\varepsilon U_{r_2}^{(2)\top} H_2 U_{r_2}^{(2)} & 0 \\
0 & S_3
\end{bmatrix} L_3^\top,
L_3= \begin{bmatrix}
I_{r_2} & 0 \\
Z_3 & I_{k-\sum_{i=0}^2 r_i}
\end{bmatrix}
,
\end{equation}
\begin{equation}
Z_3 = \varepsilon U_{k-\sum_{i=0}^2 r_i}^{(2)\top} H_2 U_{r_2}^{(2)}(\Sigma_{r_2}+\varepsilon U_{r_2}^{(2)\top} H_2 U_{r_2}^{(2)})^{-1},~B_2=\varepsilon U^{(2)\top}_{r_2} H_2U^{(2)}_{r_2},
\end{equation}
\begin{equation}
F_2-\varepsilon H_2 = U^{(2)}L_3\begin{bmatrix}
\Sigma_{r_2}+\varepsilon U_{r_2}^{(2)\top} H_2 U_{r_2}^{(2)} & 0 \\
0 & S_3
\end{bmatrix} (U^{(2)}L_3)^\top.
\end{equation}
So, we can compute the practical idenfitiability of order $\varepsilon$

\noindent (1) Zero-order $\varepsilon~(O(1))$ for 
\begin{equation}
F_0 = F
\end{equation}
, the non-identifiable parameter is 
\begin{equation}
U_{k-r_0}^{(0)\top}\theta,    
\end{equation}
and the practically identifiable parameter is 
\begin{equation}
U_{r_0}^{(0)\top}\theta;
\end{equation}

\noindent (2) First-order $\varepsilon~(O(\varepsilon))$ for 
\begin{equation}
F_1 = U_{k-r_0}^{(0)\top} H_0 U_{k-r_0}^{(0)}, H_0 = H
\end{equation}
, the non-identifiable parameter is 
\begin{equation}
U_{k-\sum_{i=0}^1 r_i}^{(1)\top} U_{k-r_0}^{(0)\top}\theta,
\end{equation}
and the practically identifiable parameter is 
\begin{equation}
[U_{r_0}^{(0)\top}\theta+Z_1^\top U_{k-r_0}^{(0)\top}\theta; U_{r_1}^{(1)\top}U_{k-r_0}^{(0)\top}\theta];
\end{equation}

\noindent (3) Second-order $\varepsilon~(O(\varepsilon^2))$ for 
\begin{equation}
F_2 = U_{k-\sum_{i=0}^1 r_i}^{(1)\top}{H}_1 U_{k-\sum_{i=0}^1 r_i}^{(1)},\quad H_1 = U_{k-r_0}^{(0)\top} H_0 U_{r_0}^{(0)}\Sigma_{r_0}^{-1}U_{r_0}^{(0)\top} H_0 U_{k-r_0}^{(0)}, 
\end{equation}
the non-identifiable parameter is 
\begin{equation}
U_{k-\sum_{i=0}^2 r_i}^{(2)\top}U_{k-\sum_{i=0}^1 r_i}^{(1)\top} U_{k-r_0}^{(0)\top}\theta,
\end{equation}
and the practically identifiable parameter is 
\begin{equation}
[U_{r_0}^{(0)\top}\theta+Z_1^\top U_{k-r_0}^{(0)\top}\theta; U_{r_1}^{(1)\top}U_{k-r_0}^{(0)\top}\theta+Z_2^\top U^{(1)\top}_{k-r_o-r_1}U_{k-r_0}^{(0)\top}\theta;U_{r_2}^{(2)\top}U_{k-\sum_{i=0}^1 r_i}^{(1)\top}U_{k-r_0}^{(0)\top}\theta].
\end{equation}
Finally, the matrix $\hat{U}$, $B_1$,$B_2$, and $B_3$ are obtained as follows:
\begin{equation}
\hat{U} = [U_{r_0}^{(0)}+U_{k-r_0}^{(0)}Z_1, U_{k-r_0}^{(0)}U_{r_1}^{(1)}+U_{k-r_0}^{(0)} U^{(1)}_{k-r_o-r_1}Z_2,U_{k-r_0}^{(0)}U_{k-\sum_{i=0}^1 r_i}^{(1)}U_{r_2}^{(2)},U_{k-r_0}^{(0)}U_{k-\sum_{i=0}^1 r_i}^{(1)} U_{k-\sum_{i=0}^2 r_i}^{(2)}]
\end{equation}
\end{proof}


\subsubsection*{Proof of Corollary \ref{cy:1}}

\begin{proof}
Since \( H \succ 0 \) and \( H \in \text{null}(F) \setminus \{0\} \), and since \( U_{k-r} \) spans the null space of \( F \), the block \( \varepsilon U_{k-r}^\top H U_{k-r} \) is positive definite and hence invertible for \( \varepsilon > 0 \).

We compute the Schur complement of the bottom-right block in Eq.~\ref{eq:S3}:
\begin{equation}
S = \Sigma_r + \varepsilon U_r^\top H U_r - \varepsilon^2 U_r^\top H U_{k-r} 
\left( U_{k-r}^\top H U_{k-r} \right)^{-1} 
U_{k-r}^\top H U_r.
\end{equation}

We now prove \( S \succ 0 \). For any \( x \neq 0 \in \mathbb{R}^r \),
\begin{align}
x^\top S x &= x^\top \Sigma_r x + \varepsilon \left( x^\top U_r^\top H U_r x \right)
- \varepsilon^3 x^\top U_r^\top H U_{k-r} 
\left( U_{k-r}^\top H U_{k-r} \right)^{-1}
U_{k-r}^\top H U_r x.
\end{align}
Using bounds:
 \( x^\top \Sigma_r x \geq c_1 \|x\|^2 \), where \( c_1 \) is the smallest eigenvalue of \( \Sigma_r \),
\( x^\top U_r^\top H U_r x \leq \|H\| \cdot \|U_r x\|^2 \),
- The last term is bounded by \( \frac{\varepsilon^3 \|U_{k-r}^\top H U_r\|^2 \|x\|^2}{c_2} \), where \( c_2 \) is the smallest eigenvalue of \( U_{k-r}^\top H U_{k-r} \).

So there exists \( \varepsilon_0 > 0 \) such that for all \( 0 < \varepsilon < \varepsilon_0 \),
\begin{equation}
x^\top S x \geq c_1 \|x\|^2 - \varepsilon \|H\| \|x\|^2 - \frac{\varepsilon^3 \|U_{k-r}^\top H U_r\|^2 \|x\|^2}{c_2} > 0.
\end{equation}

Hence \( S \succ 0 \), using Schur complement theory \cite{golub2013matrix},  \( F + \varepsilon H \) is invertible.
\end{proof}

\subsubsection*{Proof of Corollary \ref{cy:2}}
\begin{proof}
The present metric $\mathcal{K}_0$ is the same as $\|(I-AA^{\dagger}s_i)\|_2^2$ is provided as follows:
\begin{equation}
\|(I-AA^{\dagger}s_i)\|_2^2 = s_i^\top s_i-s_i^\top(A^\dagger)^\top A^\top s_i-s_i^\top AA^\dagger s_i+s_i^\top (A^\dagger)^\top A^\top A A^\dagger s_i
\end{equation}
The generalized inverse is expressed using the EVD as:
\begin{equation}
A = U\Sigma V^\top,~A^\dagger = V \Sigma^\dagger U^\top,~ A^\top = V\Sigma^\top U^\top,~(A^\dagger)^\top = U (\Sigma^\dagger)^\top V^\top
\end{equation}
So, we have
\begin{equation}
\|(I-AA^{\dagger}s_i)\|_2^2 = s_i^\top s_i - s_i^\top U(\Sigma^\dagger)^\top \Sigma^\top U^\top s_i - s_i^\top U\Sigma \Sigma^\dagger U^\top s_i + s_i^\top U(\Sigma^\dagger)^\top \Sigma^\top \Sigma \Sigma^\dagger U^\top s_i
\end{equation}
\begin{equation}
\mathcal{K}_0 = s_i^\top s_i - s_i^\top U\Sigma V^\top V (\Sigma^\top \Sigma)^\dagger V^\top V \Sigma^\top U^\top s_i = s_i^\top s_i - s_i^\top U\Sigma (\Sigma^\top \Sigma)^\dagger \Sigma^\top U^\top s_i;
\end{equation}
\begin{equation}
(\Sigma^\dagger)^\top \Sigma^\top = \Sigma \Sigma^\dagger = (\Sigma^\dagger)^\top \Sigma^\top \Sigma \Sigma^\dagger = \Sigma (\Sigma^\top \Sigma)^\dagger \Sigma^\top = \begin{bmatrix}
I_r & 0\\
0 &0
\end{bmatrix}
\end{equation}
\begin{equation}
\mathcal{K}_0 = s_i^\top s_i- s_i^\top U \begin{bmatrix}
I_r & 0\\
0 &0
\end{bmatrix} U^\top s_i = 
\|(I-AA^{\dagger}s_i)\|_2^2.
\end{equation}
\end{proof}

\subsubsection*{Proof of Corollary \ref{cy:3}}
\begin{proof}
First, the matrix $\hat{H}$ is also postive semi-definite, because of the postive semi-definite matrix $H$. Using the projection of matrix U, the matrix $\hat{H}$ is rewritten as
\begin{equation}
\tilde{H} = \begin{bmatrix}
    h_{11} & h_{12} & K_1^\top \\
    h_{12} & h_{22} & K_2^\top \\
    K_{1} & K_{2} & \hat{H}_{22} 
\end{bmatrix}
\end{equation}
$\tilde{H}$ is also postive semi-definite. Then, we consider the Schur Complement matrix of $h_{22}$ as follows,
\begin{equation}
S = \begin{bmatrix}
    h_{11}  & K_1^\top \\
    K_1  & \hat{H}_{22}
\end{bmatrix}
-
\begin{bmatrix}
    h_{12} \\
    K_2
\end{bmatrix}
h_{22}^\dagger 
\begin{bmatrix}
    h_{12}\top & K_2^\top
\end{bmatrix}
= \begin{bmatrix}
    h_{11}-h_{12}h_{22}^\dagger h_{12}^\top  & K_1^\top - h_{12}h^\dagger_{22}K_2^\top \\
    K_1-K_2 h_{22}^\dagger h_{12}^\top  & \hat{H}_{22} - K_2 h_{22}\dagger K_2^\top
\end{bmatrix}
\end{equation}
(
\begin{equation}
\tilde{H} = \begin{bmatrix}
    h_{11} & h_{12}^\top & K_1^\top \\
    h_{12} & h_{22} & K_2^\top \\
    K_{1} & K_{2} & \hat{H}_{22} 
\end{bmatrix} \rightarrow 
\tilde{H} = \begin{bmatrix}
    h_{22} & h_{12} & K_2^\top \\
    h_{12} & h_{11} & K_1^\top \\
    K_{2} & K_{1} & \hat{H}_{22} 
\end{bmatrix} (\text{row-column permutation})
)
\end{equation}
\begin{equation}
A = h_{22},~B=[h_{12}~K_2]^\top, C = \begin{bmatrix}
h_{11} & K_1^\top \\
K_1 & \hat{H}_{22}
\end{bmatrix},
S = C -B A^{-1}B^\top
\end{equation}
)

\noindent S is postive semi-definite based on Schur Complement Lemma. Now, we sort out the index $\mathcal{K}_1$ as
\begin{align*}
\mathcal{K}_1 
=& \hat{H}_{22} - 2K_1\Lambda^{-1}G_1^\top + G_1\Lambda^{-1}h_{11}\Lambda^{-1}G_1^\top \\
&- (K_2-G_1\Lambda^{-1}h_{12})h_{22}^\dagger (K_2-G_1\Lambda^{-1}h_{12})^\top
\end{align*}
Define the vector $y = [-G_1\Lambda^{-1} ,1]$, and compute the quadratic form $y^\top S y$ as
\begin{align}
y^\top S y =& [-G_1\Lambda^{-1}~1]\begin{bmatrix}
    h_{11}-h_{12}h_{22}^\dagger h_{12}^\top  & K_1^\top - h_{12}h^\dagger_{22}K_2^\top \nonumber \\
    K_1-K_2 h_{22}^\dagger h_{12}^\top  & \hat{H}_{22} - K_2 h_{22}\dagger K_2^\top
\end{bmatrix} [-G_1\Lambda^{-1}~1]\top \nonumber \\
=& G_1\Lambda^{-1}(h_{11}-h_{12}h_{22}^\dagger h_{12}^\top)\Lambda^{-1}G_1^\top -2G_1\Lambda^{-1}(K_1^\top-h_{12}h_{22}^\dagger K_{2}^\top) + \hat{H}_{22}-K_2h_{22}^\dagger K_2^\top \nonumber \\
=& \text{term}~A+\text{term}~B.
\end{align}
where
\begin{align}
\text{term}~A &= G_1\Lambda^{-1}h_{11}\Lambda^{-1}G_1^\top-2G_1\Lambda^{-1}K_1^\top+\hat{H}_{22} \nonumber\\
\text{term}~B &= - (G_1\Lambda^{-1}h_{12}h_{22}^\dagger h_{12}^\top\Lambda^{-1}G_1^\top-2G_1\Lambda^{-1}h_{12}h_{22}^\dagger K_{2}^\top+K_2h_{22}^\dagger K_2^\top) \nonumber \\
&= - (K_2-G_1\Lambda^{-1}h_{12})h_{22}^\dagger (K_2-G_1\Lambda^{-1}h_{12})^\top
\end{align}
So, we obtain $\mathcal{K}_1 = y^\top S y$. Since $S$ is positive semi-definite, we have $\mathcal{K}_1 \geq 0$.
\end{proof}

\subsubsection*{Proof of Corollary \ref{cy:4}}
\begin{proof}
When $\mathcal{K}_1 = 0$, the vector $y = [-G_1\Lambda^{-1}, 1]$ is at the null space of $S$ and $Sy^\top = 0$. Now, we expand the $Sy^\top$ as follows,
\begin{equation}
S \begin{bmatrix}
    -\Lambda^{-1}G_1^\top \\
    1
\end{bmatrix} = \begin{bmatrix}
    h_{11}-h_{12}h_{22}^\dagger h_{12}^\top  & K_1^\top - h_{12}h^\dagger_{22}K_2^\top \\
    K_1-K_2 h_{22}^\dagger h_{12}^\top  & \hat{H}_{22} - K_2 h_{22}\dagger K_2^\top
\end{bmatrix} 
\begin{bmatrix}
    -\Lambda^{-1}G_1^\top \\
    1
\end{bmatrix} =
\begin{bmatrix}
    0 \\
    0
\end{bmatrix}.
\end{equation}
Here, we obtain an important equation as follows:
\begin{equation}
K_1-G_1\Lambda^{-1}h_{11} = (K_2-G_1\Lambda^{-1}h_{12})h_{22}^\dagger h_{12}^\top
\end{equation}
Denote $X = K_1-G_1\Lambda^{-1}h_{11},~Y = (G_1\Lambda^{-1}h_{12}-K_2)h_{22}^\dagger h_{12}^\top$, and compute $\mathcal{K}_2$ :
\begin{align}
\text{term~1} =& K_1\Lambda^{-1}K_1^\top - G_1\Lambda^{-1}h_{11}\Lambda^{-1}K_1^\top - K_1\Lambda^{-1}h_{11}\Lambda^{-1}G_1^\top \nonumber \\
&+G_1\Lambda^{-1}h_{11}\Lambda^{-1}h_{11}\Lambda^{-1}G_1^\top \nonumber \\
=& (K_1-G_1\Lambda^{-1}h_{11})\Lambda^{-1}(K_1-G_1\Lambda^{-1}h_{11})^\top \nonumber \\
=& X \Lambda^{-1} X^\top    \\
\text{term~2} =& (G_1\Lambda^{-1}h_{12}-K_2)[h_{22}^\dagger(h_{12}^\top \Lambda^{-1} h_{12})h_{22}^\dagger](G_1\Lambda^{-1}h_{12}-K_2)^\top \nonumber \\
=& ((G_1\Lambda^{-1}h_{12}-K_2)h_{22}^\dagger h_{12}^\top)\Lambda^{-1}((G_1\Lambda^{-1}h_{12}-K_2)h_{22}^\dagger h_{12}^\top)^\top \nonumber \\
=& Y \Lambda^{-1} Y^\top  \\
\text{term~3} =& (K_1\Lambda^{-1}-G_1\Lambda^{-1}h_{11}\Lambda^{-1})h_{12}h_{22}^\dagger (G_1\Lambda^{-1}h_{12}-K_2)^\top \nonumber \\
=&X\Lambda^{-1}Y^\top \\
\text{term~4} =& (G_1\Lambda^{-1}h_{12}-K_2)h_{22}^\dagger h_{12}^\top(K_1\Lambda^{-1}-G_1\Lambda^{-1}h_{11}\Lambda^{-1})^\top \nonumber \\
=& Y\Lambda^{-1}X^\top  \\
\mathcal{K}_2 =& - (\text{term~1}+\text{term~2}+\text{term~3}+\text{term~4}) = -(X+Y)\Lambda^{-1}(X+Y)^\top 
\end{align}
Because of $X+Y=0$, the index $\mathcal{K}_2$ is computed as zero $(\mathcal{K}_2=0)$.
\end{proof}

\subsubsection*{Proof of Theorem \ref{thm:3}}
\begin{proof}
First, the variance matrix of the zero-order and first-order $\varepsilon$ are written as follows.
\begin{align}
CoV_0 &= \sigma^2 U_{k-r_0}^{(0)}U_{k-r_o}^{(0)\top} \nonumber \\
CoV_1 &= \sigma^2 (U_{k-r_0}^{(0)}U_{k-\sum_{i=0}^{1}r_i}^{(1)})(U_{k-r_0}^{(0)}U_{k-\sum_{i=0}^{1}r_i}^{(1)})^\top
\end{align}
Denote $\mathcal{S}_0=\text{range}(U_{k-r_0}^{(0)})$ as zero-order unidentifiable subspace, and $\mathcal{S}_1=\text{range}(U_{k-r_0}^{(0)}U_{k-\sum_{i=0}^{1}r_i}^{(1)})$ as first-order $\varepsilon$ unidentifiable subspace. Specifically, $U_{k-\sum_{i=0}^{1}r_i}^{(1)}$ acts as a basis selection within the range of $U_{k-r_0}^{(0)}$:
\begin{equation}
\text{range} (U_{k-r_0}^{(0)}U_{k-\sum_{i=0}^{1}r_i}^{(1)})\subseteq \text{range} (U_{k-r_0}^{(0)})\Longrightarrow\mathcal{S}_1 \subseteq \mathcal{S}_0 .
\end{equation}
Using the property of projection subspace, the projection of $(\nabla_\theta h_l)\top$ onto $\mathcal{S}_1$ is equivalent to projecting the already-projected vector $P_0(\nabla_\theta h_l)\top$ onto the smaller subspaces $\mathcal{S}_1$:
\begin{equation}
P_{\mathcal{S}_1}(\nabla_\theta h_l)^\top = P_{\mathcal{S}_1}(P_{\mathcal{S}_0}(\nabla_\theta h_l)^\top)
\end{equation}
By the properties of orthogonal projections (which are contractive operators), we have:
\begin{equation}
\|P_{\mathcal{S}_1}(\nabla_\theta h_l)^\top\|_2^2 =  \|P_{\mathcal{S}_1}(P_{\mathcal{S}_0} (\nabla_\theta h_l)^\top)\|_2^2 \leq \|P_{\mathcal{S}_0}(\nabla_\theta h_l)^\top\|_2^2
\end{equation}
Finally, we obtain the following inequality:
\begin{equation}
Var_1(h_l)= \sigma^2 \|P_1(\nabla_\theta h_l)^\top\|_2^2 \leq \sigma^2\|P_0(\nabla_\theta h_l)^\top\|_2^2
= Var_0(h_l)
\end{equation}
\end{proof}

\subsection*{Supplementary Text} 
\subsubsection*{Computing the FIM and H for ODE}
We introduce how to obtain the gradient vector $\nabla_\theta h_l(t;\boldsymbol{\theta})$ and Hessian matrix $\nabla^2_\theta h_l(t;\boldsymbol{\theta})$ $(l=1,2,\cdots,L)$ for ordinary differential equation (ODE). When we have an ODE as
\begin{equation}
\frac{dx}{dt} = f(x; \boldsymbol{\theta}), ~ x \in \mathbb{R}^n, ~ f \in \mathbb{R}^n,~\boldsymbol{\theta} \in \mathbb{R}^k,
\end{equation}
we compute the gradient of $\boldsymbol{\theta}$ as
\begin{align}
\frac{dy}{dt} = \frac{\partial}{\partial \theta} \frac{dx}{dt} = \frac{\partial}{\partial \theta} f(x; \theta),~y = \nabla_{ \boldsymbol{\theta}}x, ~ y \in \mathbb{R}^{n \times k}.
\end{align}
Then, we can write the ODE of each component  of $y$ as
\begin{align}
\label{eq:S57}
\frac{d}{dt} y_{ij} &= \frac{\partial}{\partial \theta_j} f_i(x; \theta) = \sum_{m=1}^n \frac{\partial f_i}{\partial x_m} \frac{\partial x_m}{\partial \theta_j} + \frac{\partial f_i}{\partial \theta_j} = \sum_{m=1}^n \frac{\partial f_i}{\partial x_m} y_{mj} + \frac{\partial f_i}{\partial \theta_j}
\end{align}
Then, we futher calculate the Hessian tensor $H$ as
\begin{equation}
\label{eq:S58}
\frac{d}{dt} z_{ijp} = \frac{\partial}{\partial \theta_p} \left( \sum_{m=1}^n \frac{\partial f_i}{\partial x_m} y_{mj} + \frac{\partial f_i}{\partial \theta_j} \right), z = \frac{\partial y}{\partial \theta}, ~ z_{ijp}= \frac{\partial y_{ij}}{\partial \theta_p}, ~ z \in \mathbb{R}^{n \times k \times k}.
\end{equation}
The Eq.~\ref{eq:S58} can be rewritten as four terms as follows
\begin{align}
\text{Term 1} &: \sum_{m=1}^n \frac{\partial f_i}{\partial x_m} \frac{\partial y_{mj}}{\partial \theta_p} = \sum_{m=1}^n \frac{\partial f_i}{\partial x_m} z_{mjp}; \nonumber \\
\text{Term 2} &: \sum_{m=1}^n \left( \sum_{l=1}^n \frac{\partial}{\partial x_l} \left( \frac{\partial f_i}{\partial x_m} \right) \frac{\partial x_l}{\partial \theta_p} \right) y_{mj} = \sum_{m,l}^n \frac{\partial^2 f_i}{\partial x_l \partial x_m} y_{lp} y_{mj}; \nonumber \\
\text{Term 3} &: \quad \sum_{m=1}^n \frac{\partial^2 f_i}{\partial x_m \partial \theta_p} y_{mj}; \nonumber \\
\text{Term 4} &: \quad \sum_{m=1}^n \frac{\partial}{\partial x_m} \left( \frac{\partial f_i}{\partial \theta_j} \right) \frac{\partial x_m}{\partial \theta_p} = \sum_{m=1}^n \frac{\partial^2 f_i}{\partial x_m \partial \theta_j} y_{mp}; \nonumber \\
\text{Term 5} &: \quad \frac{\partial^2 f_i}{\partial \theta_j \partial \theta_p}.
\end{align}
And we sort our these terms as
\begin{align}
\label{eq:S60}
\frac{d}{dt} z_{ijp} = \sum_{m=1}^n \frac{\partial f_i}{\partial x_m} z_{mjp} + \sum_{m,l}^n \frac{\partial^2 f_i}{\partial x_l \partial x_m} y_{lp} y_{mj} + \sum_{m=1}^n \frac{\partial^2 f_i} 
{\partial x_m \partial \theta_p} y_{mj} + \sum_{m=1}^n \frac{\partial^2 f_i}{\partial x_m \partial \theta_j} y_{mp} + \frac{\partial^2 f_i}{\partial \theta_j \partial \theta_p}
\end{align}
When the observed variable is $h(x)\in \mathbb{R}^{L}$, the gradient matrix $\hat{y}\in \mathbb{R}^{L\times K}$ is computed as
\begin{equation}
\hat{y} = \frac{\partial h}{\partial \theta} = H(x)y,~H(x) = \nabla_x h
\end{equation}
and Hessian tensor $\hat{z}\in \mathbb{R}^{L\times K\times K}$ is rewritten as
\begin{equation}
\hat{z} = \frac{\partial \hat{y}}{\partial \theta} = \frac{\partial H(x)}{\partial \theta} y + H(x) \frac{\partial y}{\partial \theta} = \frac{\partial H(x)}{\partial \theta} y + H(x)z
\end{equation}
\begin{equation}
\frac{\partial H(x)}{\partial \theta} = \frac{\partial H(x)}{\partial x} \frac{\partial x}{\partial \theta} = \mathcal{H}_{obs}\cdot y, ~\hat{z} = \mathcal{H}_{obs}[y,y]+H(x)z
\end{equation}
\begin{equation}
\hat{z}_{ijp} = \sum_{l=1}^{n}\sum_{r=1}^{n} \frac{\partial h_i}{\partial x_l\partial x_r} y_{lj}y_{rp}+\sum_{l=1}^{n} \frac{\partial h_i}{\partial x_l}z_{ljp}~(i\in\{1,2,\cdots,L\},j,p\in\{1,2,\cdots,K\})
\end{equation}
With respect to the initial value of Eqs~\ref{eq:S57} and \ref{eq:S60}, we set zero matrix or tensor since the parameters are independent on the initial value.
\subsubsection*{HIV host-virus dynamics}
The basic model of viral infection \cite{perelson2002modelling,stafford2000modeling} is used to understand the HIV infection dynamics as:
\begin{align}
\frac{dT}{dt} &= \lambda -dT-kVT \nonumber \\
\frac{dI}{dt} &= kVT-\delta I \nonumber \\
\frac{dV}{dt} & = \pi I - cV
\end{align}
where includes activated CD4+ T cells $(T)$, productively infected CD4+ T cells $(I)$, and HIV concentration $(V)$. The following details regarding the model parameters are summarized in Table~\ref{tab:S1}. To account for the latency between exposure and clinical observation, the interval from the initial infection event to the first viral load measurement is fixed at 35 days\cite{stafford2000modeling}.
\subsubsection*{Spatiotemporal dynamics of $A\beta$ on the brain}
The spatiotemporal dynamics of $A\beta$ on the brain is established by Network-based PDE model as
\begin{equation}
\label{eq:S76}
\frac{d\boldsymbol{A}}{dt} + L\boldsymbol{A} = \boldsymbol{\lambda} \odot \boldsymbol{A} \odot (\boldsymbol{K}-\boldsymbol{A}).
\end{equation}
Where $\boldsymbol{A}\in \mathbb{R}^{68}, \boldsymbol{\lambda}\in \mathbb{R}^{68}$, and $\boldsymbol{K}\in \mathbb{R}^{68}$ represent the $A\beta$ concentration, recruitment rate of $A\beta$, and capacity of $A\beta$, respectively. $L$ is the patient-specific graph Laplacian
corresponding to the patient’s specific brain functional connectivity network. We use the functional connectivity (FC) matrix of 68 regions to compute the patient-specific graph Laplacian $L$, which refers to the our group previous work on how to learn  patient-specific functional connectivity matrix \cite{li2025data}. 

\newpage

\subsection*{Supplementary Figures}
\begin{figure}[ht]
\centering
\includegraphics[width=0.9\linewidth]{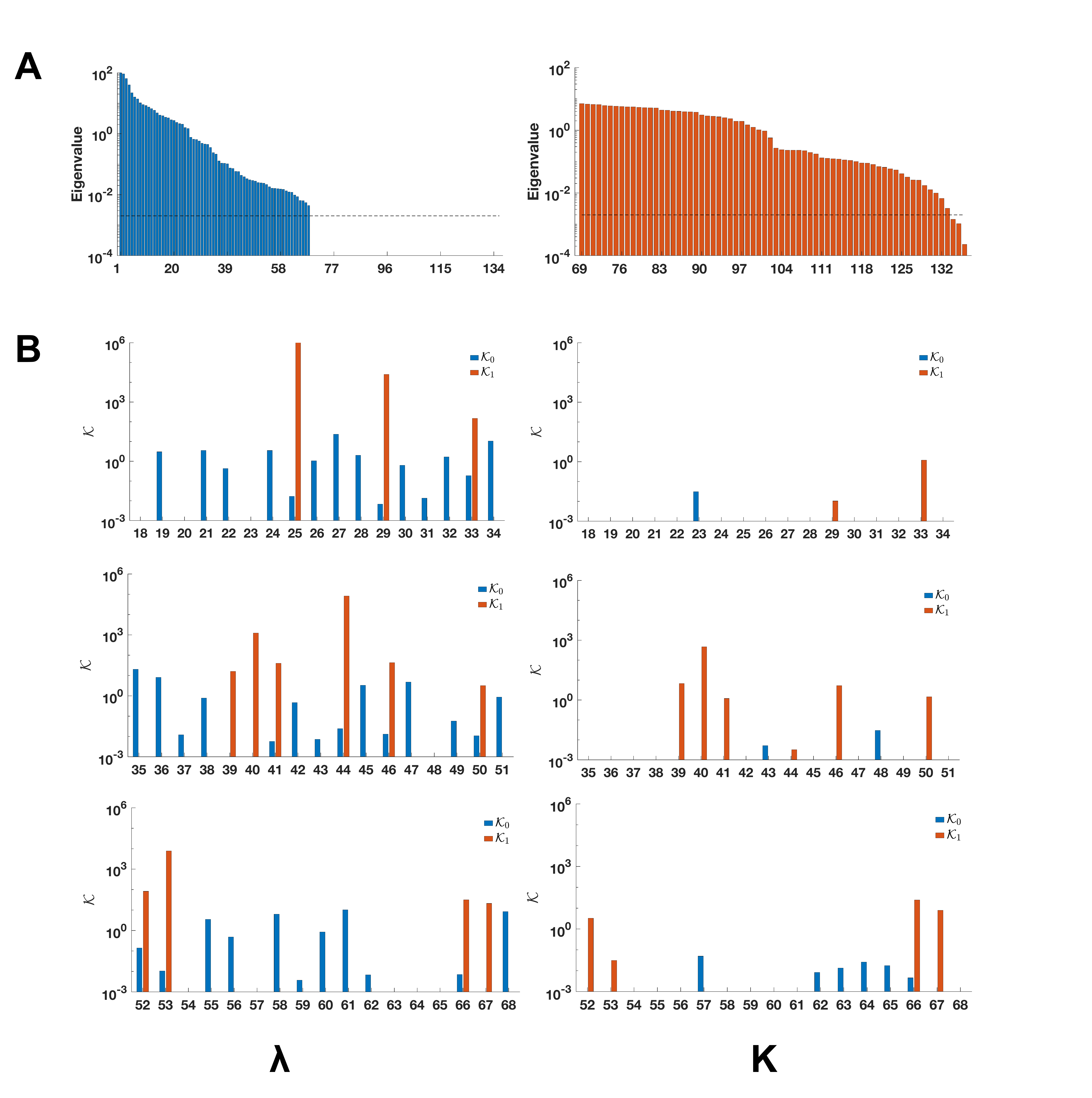}
\caption{\scriptsize\textbf{Validation method accuracy in polynomial fitting.} \textbf{(A)} Eigenvalue of $\varepsilon$-order practical identifiability analysis and heatmap of the eigenvector matrix. The dashed line is the threshold $\epsilon=10^{-3}$. \textbf{(B)} The metrics $\mathcal{K}_i$ for conducting practical identifiability analysis for the remaining 51 brain regions.}
\label{fig:S2}
\end{figure}

\newpage

\subsection*{Supplementary Tables}
\begin{table}[ht]
	\footnotesize
	\caption{Parameters on the HIV infection Dynamics in Fig.~\ref{fig:3}.}\label{tab:S1}
	\begin{center}
		\begin{tabular}{|c|c|c|c|} \hline
			\bf Parameter&\bf Biological Description &  \bf Value & \bf unit   \\ 
			\hline		
            
            $\lambda$&the rate of activated CD4+T cells generation&0.0659 &$\text{cell}\cdot\text{day}^{-1}$ \\
    
            $d$& the death rate of activated cells & 0.0145&$\text{day}^{-1}$\\
    
            $k$ & the product rate of activated cell density and virus concentration& 0.0008 & $\mu\text{l~viron}\cdot\text{day}^{-1}$\\
            
            $\delta$& the death rate of productively infected cells&0.3417&$\text{day}^{-1}$\\
            
            $\pi$&the rate of producing virions & 620&$\text{viron}\cdot\text{day}^{-1}$ \\
            $c$&the cleared rate of virus&3.0&$\text{day}^{-1}$\\
    
            \hline
		\end{tabular}
	\end{center}

\end{table}

            
    
    
            
            
    


\begin{table}[ht]
\centering
\footnotesize
	\caption{Parameters on $A\beta$ Spatiotemporal Dynamics.}\label{tab:S3}
\begin{tabular}{|c|c|c|c|c|c|c|c|c|c|c|}
\hline
\textbf{Region Number (i)} & 1 & 2 & 3 & 4 & 5 & 6&7&8&9&10 \\ \hline
$\lambda_i(\times10^{-2})$  & 4.83 & 2.24 & 2.79 & 7.63 & 7.53 & 0.032 & 0.004 & 5.44 & 0.93 & 1.72 \\ \hline
$K_i$ & 5.99 & 4.50 & 5.38 & 0.0005 & 0.14 & 0.019 & 1.11 & 0.0005 & 5.96 & 5.44 \\ \hline
\end{tabular}
\end{table}

\begin{table}[ht]
\centering
\footnotesize
	\caption{Parameters on $A\beta$ Spatiotemporal Dynamics.}\label{tab:S4}
\begin{tabular}{|c|c|c|c|c|c|c|c|c|c|c|}
\hline
\textbf{Region Number (i)} & 11 & 12 & 13 & 14 & 15 & 16&17&18&19&20 \\ \hline
$\lambda_i(\times10^{-2})$  & 11.84 & 26.02 & 5.19 & 112.48 & 0.45 & 20.17 & 3.74 & 12.29 & 1.2 & 59.39 \\ \hline
$K_i$ & 0.02 & 0.0001 & 4.95 & 0.0061 & 3.67 & 0.0044 & 5.54 & 0.0141 & 5.58 & 0.0069 \\ \hline
\end{tabular}
\end{table}

\begin{table}[ht]
\centering
\footnotesize
	\caption{Parameters on $A\beta$ Spatiotemporal Dynamics.}\label{tab:S5}
\begin{tabular}{|c|c|c|c|c|c|c|c|c|c|c|}
\hline
\textbf{Region Number (i)} & 21 & 22 & 23 & 24 & 25 & 26&27&28&29&30 \\ \hline
$\lambda_i(\times10^{-2})$  & 3.03 & 5.70 & 151.19 & 4.94 & 0.0005 & 17.88 & 2.90 & 0.22 & 0.028 & 2.69 \\ \hline
$K_i$ & 5.01 & 3.94 & 0.0005 & 5.41 & 1.32 & 4.73 & 4.39 & 5.71 & 0.02 & 4.43 \\ \hline
\end{tabular}
\end{table}

\begin{table}[ht]
\centering
\footnotesize
	\caption{Parameters on $A\beta$ Spatiotemporal Dynamics.}\label{tab:S6}
\begin{tabular}{|c|c|c|c|c|c|c|c|c|c|c|}
\hline
\textbf{Region Number (i)} & 31 & 32 & 33 & 34 & 35 & 36&37&38&39&40 \\ \hline
$\lambda_i(\times10^{-2})$  & 9.20 & 2.24 & 3.05 & 2.25 & 4.94 & 2.13 & 12.44 & 1.41 & 17.61 & 21.12 \\ \hline
$K_i$ & 0.001 & 5.85 & 0.034 & 5.29 & 5.34 & 3.99 & 0.0055 & 4.75 & 0.079 & 0.004 \\ \hline
\end{tabular}
\end{table}

\begin{table}[ht]
\centering
\footnotesize
	\caption{Parameters on $A\beta$ Spatiotemporal Dynamics.}\label{tab:S7}
\begin{tabular}{|c|c|c|c|c|c|c|c|c|c|c|}
\hline
\textbf{Region Number (i)} & 41 & 42 & 43 & 44 & 45 & 46&47&48&49&50 \\ \hline
$\lambda_i(\times10^{-2})$  & 7.22 & 0.025 & 35.61 & 0.009 & 4.91 & 14.91 & 2.21 & 152.87 & 3.82 & 28.97 \\ \hline
$K_i$ & 0.0136 & 3.45 & 0.013 & 0.0004 & 5.357 & 0.005 & 4.909 & 0.0017 & 0.0085 & 0.0063 \\ \hline
\end{tabular}
\end{table}

\begin{table}[ht]
\centering
\footnotesize
	\caption{Parameters on $A\beta$ Spatiotemporal Dynamics.}\label{tab:S8}
\begin{tabular}{|c|c|c|c|c|c|c|c|c|c|c|}
\hline
\textbf{Region Number (i)} & 51 & 52 & 53 & 54 & 55 & 56&57&58&59&60 \\ \hline
$\lambda_i(\times10^{-2})$  & 5.77 & 6.55 & 0.064 & 29.03 & 1.85 & 11.02 & 198.06 & 2.70 & 19.04 & 17.28 \\ \hline
$K_i$ & 5.324 & 0.0739 & 0.1337 & 0.0043 & 5.296 & 3.82 & 0.0019 & 5.681 & 0.0138 & 4.712 \\ \hline
\end{tabular}
\end{table}

\begin{table}[ht]
\centering
\footnotesize
	\caption{Parameters on $A\beta$ Spatiotemporal Dynamics.}\label{tab:S9}
\begin{tabular}{|c|c|c|c|c|c|c|c|c|c|c|}
\hline
\textbf{Region Number (i)} & 61 & 62 & 63 & 64 & 65 & 66&67&68 \\ \hline
$\lambda_i(\times10^{-2})$  & 0.81 & 45.39 & 82.04 & 104.90 & 69.85 & 37.93 & 16.63 & 0.55 \\ \hline
$K_i$ & 2.55 & 0.006 & 0.0014 & 0.0034 & 0.0066 & 0.006 & 0.0811 & 5.87  \\ \hline
\end{tabular}
\end{table}


\clearpage 



\end{document}